\newcommand\reallywidetilde[1]{\ThisStyle{%
  \setbox0=\hbox{$\SavedStyle#1$}%
  \stackengine{-.1\LMpt}{$\SavedStyle#1$}{%
    \stretchto{\scaleto{\SavedStyle\mkern.2mu\AC}{.5150\wd0}}{.6\ht0}%
  }{O}{c}{F}{T}{S}%
}}
\newcommand\lf{$L_\infty${}}
\newtheorem{prop}{Proposition}
\newtheorem{definition}{Definition}
\newtheorem{lemma}{Lemma}
\newcommand\maps{{\rm Maps}}
\def\alex{
        \topmargin -60pt
        \oddsidemargin +5pt
        \headheight 5pt \headsep 0pt
        \textwidth 6.0in 
        \textheight 10.85in 
        \marginparwidth .875in
        \parskip 5pt plus 1pt \jot = 1.5ex}
\def\moth{\mathsurround=0pt}
\newdimen\zo \zo=0pt
\def\tick{\leaders\hrule height 0.5ex depth 0pt \hskip 0.5pt}
\def\upboxfill{$\moth \setbox\zo\hbox{\tick}%
  \hskip 3pt\hbox to 0pt{$\tick$\hss}\hrulefill \hbox to 7.5pt{$\tick$\hss}$}
\def\dtick{\leaders\hrule height .34pt depth 0.5ex \hskip 0.5pt}
\def\downboxfill{$\moth \setbox\zo\hbox{\dtick}%
  \hskip 2pt\hbox to 0pt{$\dtick$\hss}\hrulefill \hbox to 2pt{$\dtick$\hss}$}
\def\pd{\partial}
\def\bec{\begin{center}}
\def\ec{\end{center}}
\def\cM{{\cal M}}
\def\cN{{\cal N}}
\def\cP{{\cal P}}
\def\cI{{\cal I}}
\def\be{\begin{equation}}
\def\ee{\end{equation}}
\def\bea{\begin{eqnarray}}
\def\eea{\end{eqnarray}}
\def\ba{\begin{array}}
\def\ea{\end{array}}
\definecolor{cambridgeblue}{rgb}{0.64, 0.76, 0.68}
	\definecolor{lapislazuli}{rgb}{0.15, 0.38, 0.61}
\definecolor{awesome}{rgb}{1.0, 0.13, 0.32}
\definecolor{aureolin}{rgb}{0.99, 0.93, 0.0}
\definecolor{almond}{rgb}{0.94, 0.87, 0.8}
\definecolor{antiquewhite}{rgb}{0.98, 0.92, 0.84}
\begin{document}

\begin{titlepage}
\rightline{\today}
\begin{center}
{\large \bf{Brane current algebras and generalised geometry from QP manifolds}
}\\
{\large \it{Or, ``when they go high, we go low''}}
\\
\vskip 1.0cm

 {\large {Alex S.~Arvanitakis}}
\vskip .3cm

  {\it  
      Theoretische Natuurkunde, Vrije Universiteit Brussel, \\ and the International Solvay Institutes, \\ Pleinlaan 2, B-1050 Brussels, Belgium}

\vskip .3cm
\texttt{alex.s.arvanitakis@vub.be}

\vskip 1.0cm
{\bf Abstract}

\end{center}


\begin{narrower}


\noindent
We construct a Poisson algebra of brane currents from a QP-manifold ({\it alias} symplectic \lf-algebroid), and show their Poisson brackets take a universal geometric form. This generalises a result of Alekseev and Strobl on string currents and generalised geometry to include branes with worldvolume gauge fields, such as the D3 and M5. Our result yields a universal expression for the 't Hooft anomaly that afflicts isometries in the presence of fluxes. We determine the current algebra in terms of (exceptional) generalised geometry, and show that the tensor hierarchy gives rise to a brane current hierarchy.  Exceptional complex structures produce pairs of anomaly-free current subalgebras on the M5-brane worldvolume.

\end{narrower}

\end{titlepage}

\tableofcontents
\newpage

\section{Introduction and Discussion}
Consider the phase space of a closed bosonic string. It is spanned by the position $X^\mu(t,\sigma)$ and momentum $P_\mu(t,\sigma)$ ($\sigma\in [0,2\pi],\mu=0,1,2,\dots d-1$), that satisfy canonical equal-time Poisson brackets
\be
\{X^\mu(\sigma_1), P_\nu(\sigma_2)\}=\delta^\mu_\nu \delta(\sigma_1-\sigma_2).
\ee
For the string-theory string propagating on a metric and $B$-field background, the Virasoro constraints are expressed through $O(d,d)$-covariant quantities: the generalised metric $\mathcal M_{MN}$ containing the metric and $B$-field, the $O(d,d)$ structure $\eta_{MN}$, and the expression
\be
\label{ZMAlekseevStrobl}
Z_M\equiv\begin{pmatrix}  \partial_\sigma X^\mu \\ P_\mu\end{pmatrix}\,,
\ee
in terms of which the \emph{phase-space} or hamiltonian form of the action is
\be
\label{into:stringphasespaceaction}
S[X,P;e,u]=\int dtd\sigma \:\partial_t X^\mu P_\mu -  e\, \mathcal M^{MN} Z_M Z_N - u \,\eta^{MN} Z_M Z_N\,,
\ee
where $u,e$ are lagrange multipliers that enforce the worldsheet spatial diffeomorphism ($\eta^{MN} Z_M Z_N=0$) and hamiltonian ($\mathcal M^{MN} Z_M Z_N=0$) constraints; the Virasoro constraints are their sums and differences. This appearance of $O(d,d)$ in the hamiltonian form of worldsheet dynamics is an old observation due to Siegel \cite{Siegel:1993xq}. We emphasise here that the background is \emph{arbitrary}; it is not necessarily a torus. (For tori, of course, the observation is older.)

Alekseev and Strobl observed much later \cite{Alekseev:2004np} that there is a distinguished class of currents involving the same expressions:
\be
\label{ASdistinguishedcurrents}
J(v,\alpha)(\sigma)\equiv v^\mu(X(\sigma)) P_\mu(\sigma)+ \alpha_\mu(X(\sigma))\partial_\sigma X(\sigma)\,.
\ee
Here $v$ is a vector field and $\alpha $ is a 1-form on the background geometry $M$. 
They are distinguished because their Poisson brackets resolve into geometric expressions (in the tensors $(v,u,\alpha,\beta)$ on $M$):
\be
\label{AScurrentalgebra}
\begin{split}
\{J(v,\alpha)(\sigma_1),J(u,\beta)(\sigma_2)\}=-J(w,\gamma)\delta(\sigma_1-\sigma_2)+(\iota_u\alpha +\iota_v\beta)(X(\sigma_2))\partial_{\sigma_1}\delta(\sigma_1-\sigma_2)\,,\\
w={\mathcal L}_v u\,,\qquad \gamma={\mathcal L}_v\beta-{\mathcal L}_u\alpha + d(\iota_u\alpha)\,.
\end{split}
\ee
In fact the vector $w$ and 1-form $\gamma$ in the resulting current $J(w,\gamma)$ are given by the Dorfman bracket on the generalised tangent bundle $TM\oplus T^\star M$, while the \emph{anomalous} or \emph{Schwinger} term involves the natural $O(d,d)$ structure of $TM\oplus T^\star M$. Alongside the obvious projection $TM\oplus T^\star M\to TM$, these define the data of a \emph{Courant algebroid} \cite{courant1990dirac,liu1995manin}. By incorporating the effect of a closed NS 3-form $H$ in the Poisson brackets, Alekseev-Strobl also showed they can recover arbitrary \emph{exact} Courant algebroids (classified by the class of $H$ in $H_\text{de Rham}^3(M)$  \cite{Severa:2017oew}) from the current algebra.

Here, we generalise the Alekseev-Strobl picture by reversing their implication: for every algebroid structure of a certain type on a manifold $M$ (roughly, a Leibniz algebroid with extra data), we assign a Poisson  algebra of currents in a brane sigma model on $M$. Less roughly, the algebroids are those that have a QP-manifold formulation: a graded manifold $\mathcal M$ with base $M$, a vector field $Q$ with $Q^2=0$, and a graded symplectic form $\omega$. QP manifolds also appear in the AKSZ construction of topological field theories \cite{alexandrov1997geometry}, which we will employ (or rather, a minor generalisation thereof). A ``QP'' approach to current algebras was also pursued by Ikeda and Xu \cite{Ikeda:2013vga}, who constructed a current algebra in the special case where $\mathcal M$ is a shifted cotangent bundle of another graded manifold $\mathcal M'$:
$$\mathcal M=T^\star[P]\mathcal M'\,.
$$
(We were also notified of reference \cite{Ikeda:2011ax} in the same context, again for the cotangent case $\mathcal M=T^\star[P]\mathcal M'$.)
In our construction, we lift this restriction on $\mathcal M$. This in particular allows us to talk about branes with self-dual worldvolume gauge fields, like the M5 brane.

The new ingredient in our construction is the apparently obscure \emph{zero-locus reduction} (ZL reduction) for QP manifolds, due to Grigoriev, Semikhatov, and Tipunin \cite{grigoriev2001becchi}. The ZL reduction produces the Poisson bracket from the AKSZ construction and also selects a distinguished class of currents, analogous to the Alekseev-Strobl string currents. These correspond to the same bundles that appear in the \lf-algebra associated to the QP manifold. For those QP manifolds that capture exceptional generalised geometry \cite{Arvanitakis:2018cyo}, the distinguished currents are therefore associated to the tensor hierarchy, and their Poisson brackets are governed by the ``exceptional'' analogs of Courant/Dorfman brackets.

The last fact has been empirically observed (through direct calculations of Poisson brackets) in a number of works that discuss brane current algebras in phase space, starting with a relatively early paper by Bonelli and Zabzine that relates a $p$-brane current algebra to an algebroid involving $TM\oplus \Lambda^p T^\star M$ \cite{Bonelli:2005ti}, and more recently for M2, M5 and D-branes by Hatsuda and collaborators \cite{Hatsuda:2020buq,Hatsuda:2013dya,Hatsuda:2012uk} and also by Sakatani and Uehara \cite{Sakatani:2020iad}. Our construction explains why $O(d,d)$- and $E_{d(d)}$-covariant structures and (exceptional) Dorfman, Courant, Vinogradov \dots brackets were encountered on the brane phase space, and in some cases generalises the bracket formulas to accommodate the effects of the Wess-Zumino coupling to background fluxes. (This includes the cases of the M5 and D3 branes, to our knowledge.)

We emphasise that in the QP constructions in this paper there appear no ``extended coordinates'' corresponding to brane winding modes, because there appears to be some tension between $Q^2=0$ and extended coordinates \cite{Deser:2016qkw,Heller:2016abk,Chatzistavrakidis:2018ztm}. Works including extended coordinates in the hamiltonian formalism on the brane include Sakatani and Uehara \cite{Sakatani:2017vbd,Sakatani:2020umt}, Linch and Siegel \cite{Linch:2015qva,Linch:2015fya} and the very recent paper of Osten \cite{Osten:2021fil} that focusses on the realisation of exceptional generalised geometry on brane currents. Another perspective complementary to ours is by Strickland--Constable \cite{Strickland-Constable:2021afa}, where the brane equations of motion on phase space are interpreted as geodesic equations for (exceptional) generalised geometry (see also \cite{Berkeley:2014nza,Berman:2014hna}).

\subsection{Main result}
To state the result we need a few ingredients, similar to those of the AKSZ construction. We need a Q-manifold $\mathcal N$ for the brane worldvolume, and the QP manifold $\cM$ for the target space:
\begin{itemize}
\item $(\mathcal M,\omega, Q)$ is a QP manifold with Q-structure $Q$ and symplectic form $\omega$ of integer degree $P>0$  (and thus also a Poisson bracket $(\bullet\,,\bullet)$ of degree $-P$),
\item $(\mathcal N,D)$ is a Q-manifold of compact base $\Sigma$ and Q-structure $D$. $\mathcal N$ must also have a (non-degenerate) measure $\int_\mathcal N: C^\infty(\mathcal N)\to\mathbb R$ of degree $1-P$, that satisfies ``integration by parts'':
\be
\label{integrationbyparts}
\int_{\mathcal N} D\epsilon=0\qquad\forall \epsilon\in C^\infty(\mathcal N)\,.
\ee
\end{itemize}
Both Q structures have degree 1, as usual. We also impose the technical condition that all coordinates of $\mathcal M$ and $\mathcal N$ have $\deg \geq 0$. They must also have Grassmann parity $\deg\!\!\!\mod 2$, except in subsection \ref{sec:pequal1}. (With both conditions, they are N-manifolds in the sense of \v Severa \cite{severa2001some}.)

In the AKSZ construction the degree of $\int_\mathcal N$ would be $-P-1$. We thus need a minor generalisation. All of the usual facts are true, modulo sign differences. In particular, the Q structures $D$ and $Q$ both give rise to a pair of Q structures $\bar D$ and $\bar Q$ on the supermanifold of mappings
\be
\maps(\mathcal N\to\mathcal M)
\ee
that anticommute:
\be
[\bar D,\bar Q]\equiv \bar D\bar Q+\bar Q\bar D=0\,.
\ee

We now come to the general definition of the currents. For every function $f\in C^\infty(\mathcal M)$, we can write down a current $\langle f|$, that maps ``test'' functions $\epsilon \in C^\infty(\mathcal N)$ to scalar functionals: if $\varphi \in \maps(\mathcal N\to\mathcal M)$, we define
\be
\label{mycurrentDef}
\langle f|\epsilon\rangle (\varphi)\equiv (-1)^P \int_{\mathcal N} (\varphi^\star f) \epsilon
\ee
for the scalar functional $\langle f|\epsilon\rangle\in C^\infty(\maps(\mathcal N\to\mathcal M))$. We will write the current as
\be
\label{mycurrentDefforhumans}
\langle f|\epsilon\rangle \equiv (-1)^P \int_{\mathcal N} \bm{f} \epsilon\,,
\ee
where boldface is used for superfields (which are produced by the pullback $\varphi^\star$).

The main result is the degree 0 (i.e.~super) Poisson bracket structure and the explicit formula for the Poisson brackets of currents \eqref{mycurrentDefforhumans}:
\begin{prop}
There is a degree 0 Poisson bracket $\{\bullet\,,\bullet\}$ on the ``zero locus'' $\mathcal Z_{\bar D-\bar Q}$ of $\bar D-\bar Q$, where $\mathcal Z_{\bar D-\bar Q}$ is  defined by the following quotient modulo the ideal $\cI_{\bar D-\bar Q}$ generated by $\bar D-\bar Q$:
\be
\mathcal Z_{\bar D-\bar Q}\cong C^\infty(\maps(\mathcal N\to\mathcal M))/\cI_{\bar D-\bar Q}\,.
\ee
The bracket of two currents $\langle f|$, $\langle g|$ of the form \eqref{mycurrentDefforhumans} associated to $f,g\in C^\infty(\mathcal M)$ is
\begin{align}
\label{currentalgebramainresult}
\Big\{ \langle f|\epsilon\rangle,\langle g|\eta\rangle\Big\}=(-1)^{\epsilon(g+P+1)} \langle (f,Qg)|\epsilon\eta\rangle +(-1)^{\epsilon(g+P)+g}\langle (f,g)|\epsilon D\eta\rangle\,,
\end{align}
where the ZL quotient is implied. 
\end{prop}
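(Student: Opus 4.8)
The plan is to work on the mapping supermanifold $\maps(\mathcal N\to\mathcal M)$, where the graded symplectic form $\omega$ of degree $P$ on $\mathcal M$ together with the measure $\int_{\mathcal N}$ of degree $1-P$ transgresses to a $P+(1-P)=1$... no: one first observes that in the genuine AKSZ situation a measure of degree $-P-1$ yields a degree $-1$ (odd) bracket, i.e.\ an antibracket; here the measure has degree shifted by $1$, so transgression of $\omega$ gives a degree $0$ (even) Poisson structure on $\maps(\mathcal N\to\mathcal M)$. Concretely, I would write $\omega=d\vartheta$ locally (or use the Liouville form directly), pull back along the evaluation map $\mathcal N\times\maps(\mathcal N\to\mathcal M)\to\mathcal M$, integrate over $\mathcal N$ with $\int_{\mathcal N}$, and read off the resulting even symplectic form $\boldsymbol\omega$ on $\maps(\mathcal N\to\mathcal M)$; invert it to get $\{\bullet,\bullet\}$. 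The key computation is then the bracket of two \emph{linear} functionals of the superfields: for $f,g\in C^\infty(\mathcal M)$ one has $\langle f|\epsilon\rangle=(-1)^P\int_{\mathcal N}\boldsymbol f\,\epsilon$, and since $\boldsymbol f=\varphi^\star f$ is built pointwise out of the components of $\varphi$, the transgressed bracket should localise on the diagonal of $\mathcal N\times\mathcal N$, producing $\int_{\mathcal N}\,\overline{(f,g)}\,\epsilon\eta$ up to a sign and a boundary-type correction $\int_{\mathcal N}\,\overline{(f,g)}\,(\text{derivative terms})$. This is the heart of the formula \eqref{currentalgebramainresult} before descending to the zero locus, and I expect the ``derivative'' piece to be exactly the $\epsilon D\eta$ term once integration by parts \eqref{integrationbyparts} is used.

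Next I would establish that the naive even bracket on $C^\infty(\maps(\mathcal N\to\mathcal M))$ is not quite well-defined --- it is only defined on the zero locus $\mathcal Z_{\bar D-\bar Q}$ --- and this is where the ZL reduction of Grigoriev--Semikhatov--Tipunin enters. The point is that $\bar D-\bar Q$ is the Q-structure whose cohomology/zero-locus is the relevant one (the AKSZ action's ``equations of motion'' locus for the topological model, here adapted to the generalised degree), and the transgressed presymplectic form degenerates precisely along the directions tangent to $\cI_{\bar D-\bar Q}$; passing to the quotient $C^\infty(\maps(\mathcal N\to\mathcal M))/\cI_{\bar D-\bar Q}$ makes $\boldsymbol\omega$ nondegenerate and hence $\{\bullet,\bullet\}$ well-defined. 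I would verify that $\cI_{\bar D-\bar Q}$ is a Poisson ideal for the candidate bracket --- equivalently that $\{\,\bar D-\bar Q\,,\,\bullet\,\}$ preserves the ideal --- using $[\bar D,\bar Q]=0$, $Q^2=0$, $\bar D^2=0$ and the compatibility of $Q$ with $\omega$ (namely $\mathcal L_Q\omega=0$, so $Q$ is the Hamiltonian vector field of some degree-$P+1$ function $\Theta$ with $(\Theta,\Theta)=0$). Then the bracket descends to $\mathcal Z_{\bar D-\bar Q}$.

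The final step is bookkeeping: insert $f,g$, compute $\{\langle f|\epsilon\rangle,\langle g|\eta\rangle\}$ using the transgressed bracket, use that on the zero locus $\bar D\boldsymbol f=\bar Q\boldsymbol f$ (so that $\overline{D f}$-type terms can be traded for $\overline{Qf}$-type terms, which is what converts a ``$\bar D$ acting on the diagonal'' contribution into the $\langle (f,Qg)|\epsilon\eta\rangle$ piece), integrate by parts via \eqref{integrationbyparts} to move $D$ off $\boldsymbol f$ onto $\epsilon$ or $\eta$, and carefully track Koszul signs coming from moving $\epsilon$ (of parity $\epsilon\bmod 2$, written just $\epsilon$ in the exponents) past $g$, $\boldsymbol g$, and the degree-$(-P)$ bracket. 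The two resulting terms, with signs $(-1)^{\epsilon(g+P+1)}$ and $(-1)^{\epsilon(g+P)+g}$, are exactly \eqref{currentalgebramainresult}. The main obstacle, I expect, is not any single step but getting the signs exactly right: the generalised AKSZ setup has the measure degree shifted from the standard $-P-1$ to $1-P$, so every ``usual fact'' (transgression, integration by parts, the sign in $\overline{(f,g)}$ versus $(\boldsymbol f,\boldsymbol g)$) carries a modified sign, and these must be reconciled against the Grassmann parities $\deg\bmod 2$ of all objects involved. I would handle this by fixing once and for all a sign convention for the Liouville form and the transgression, checking it against the known $P=1$ (Alekseev--Strobl/Courant) and $P=2$ (Courant sigma model) cases, and only then stating the general signs.
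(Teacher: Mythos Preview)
Your proposal contains a genuine conceptual gap: the degree counting for the transgressed bracket is off, and consequently the role of the zero-locus (ZL) reduction is misidentified.

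Transgressing the degree-$P$ symplectic form $\omega$ along the degree-$(1-P)$ measure $\int_{\cN}$ produces a symplectic form on $\maps(\cN\to\cM)$ of degree $P+(1-P)=1$, hence a Poisson bracket $[\bullet,\bullet]$ of degree $-1$, not $0$. The paper emphasises this is a Schouten bracket, not an antibracket and not an even bracket. Moreover, this bracket is \emph{ultralocal}: one finds
\[
[\langle f|\epsilon\rangle,\langle g|\eta\rangle]=-(-1)^{(g+P)\epsilon}\langle(f,g)|\epsilon\eta\rangle\,,
\]
with no derivative/Schwinger contribution whatsoever. So your expectation that both terms in \eqref{currentalgebramainresult} arise already from transgression, with the ZL reduction merely curing a degeneracy of $\boldsymbol\omega$, cannot be right.

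The ZL reduction in the paper is not a symplectic reduction of a degenerate form; it is a \emph{derived bracket} construction. One defines
\[
\{\tilde F,\tilde G\}\equiv\widetilde{[F,Z(G)]}\,,\qquad Z=\bar D-\bar Q\,,
\]
on the quotient $\cP/\cI_Z$, and checks directly (using $Z^2=0$ and compatibility of $Z$ with $[\bullet,\bullet]$) that this is well-defined, graded-antisymmetric, and satisfies Jacobi. Since $\deg Z=+1$, the new bracket has degree $-1+1=0$. The two terms in \eqref{currentalgebramainresult} then have a transparent origin: $-\bar Q\langle g|\eta\rangle=-\langle Qg|\eta\rangle$ produces the $\langle(f,Qg)|\epsilon\eta\rangle$ piece, while $\bar D\langle g|\eta\rangle=(-1)^{g+1}\langle g|D\eta\rangle$ (via integration by parts) produces the Schwinger term $\langle(f,g)|\epsilon D\eta\rangle$. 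There is no step where ``$\bar D\boldsymbol f=\bar Q\boldsymbol f$ on the zero locus'' converts one term into the other; the $Q$ is there because $Z$ contains $\bar Q$, and the quotient is needed only to make the derived bracket antisymmetric and Jacobi.
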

\noindent We will, somewhat abusively, call the zero locus $\mathcal Z_{\bar D-\bar Q}$ the \emph{brane phase space} or just \emph{phase space} in all cases. We will also call the last term in \eqref{currentalgebramainresult} the {Schwinger term} due to the $D\eta$; when the test forms $\epsilon,\eta$ are stripped, it gives rise to a derivative of the Dirac delta function.

\paragraph{Remarks.}
\begin{enumerate}
	\item The currents \eqref{mycurrentDefforhumans} (and their linear combinations) are closed under the Poisson bracket.

	\item The bracket \eqref{currentalgebramainresult} can be seen as a BFV bracket that is compatible with the degree assignments (if we identify $\deg={\rm gh}$). In particular it is \emph{antisymmetric for even arguments}, contrary to appearances. This is guaranteed by the ZL reduction along with integration by parts \eqref{integrationbyparts}.

	\item Since the Poisson bracket is of zero degree/ghost number, the degree-zero subspace is distinguished by virtue of being closed under Poisson brackets. The only currents \eqref{mycurrentDefforhumans} of degree zero are the ones arising from functions $f\in C^\infty(\mathcal M)$ of degree less than $P$:
$$	\deg \langle f|\epsilon\rangle = f + \epsilon + 1-P=0\implies \deg f\leq P-1\,,\deg \epsilon \leq P-1\,,
	$$
	since $\mathcal M$ and $\mathcal N$ were both assumed to be N-manifolds (all degrees non-negative). This is precisely the vector space where the \lf-algebra canonically associated to the QP manifold $\cM$ lives \cite{Ritter:2015ffa}. 
  We elaborate on this in section \ref{sec:tensorhierarchy}.

	\item We do not deal with the technical issues of infinite-dimensional graded or supermanifolds such as $\maps(\mathcal N\to\mathcal M)$. (These are in principle surmountable e.g.~via the theory of Schwarz and Konechny \cite{Konechny:1997hr}; also see \cite{voronov2012vector} specifically for $\maps$). Therefore we prove the Proposition in local charts, which is anyway necessary to pin down signs and factors. The constructions in this paper can be used as local models which are to be patched together with transition functions: we can cover the underlying manifold $M$ with open sets $U_\alpha$ such that $(\cM,Q,\omega)$ takes the specific form employed in this paper, where the transition functions are symplectomorphisms. At least for $p=1$, $\cN=T[1]\Sigma=T[1]S^1$, we imagine proceeding as follows: the free loop space $\mathcal LM\equiv\maps(S^1\to M)$ can be covered by sets $\mathcal LU_\alpha$\footnote{We benefited from the discussion in \cite{Saemann:2012ab} which is in a similar context to this paper.}, and symplectomorphisms of $\cM$ lift to ones of $\maps(T[1]\Sigma\to\cM)$. This should produce a construction of $\maps(T[1]S^1\to\cM)$ as a graded manifold.

	\item An independent issue is whether $\mathcal Z_{\bar D-\bar Q}$ is the ring of functions on an infinite-dimensional (graded) manifold. This is linked to properties of the anchor map of the corresponding \lf-algebroid. It is easy, however, to use \eqref{currentalgebramainresult} and physics intuition to pull an appropriate manifold out of a hat for specific choices of $\mathcal{M}$, $\cN$, $D$, and $Q$. When $\mathcal Z_{\bar D-\bar Q}$ is a graded manifold, the charts outlined in the previous remark should lead to an explicit atlas for it.
\end{enumerate}

\subsection{Examples}

\paragraph{Real source Q-manifold examples.} The main class of examples that we pull out of hats, and those that we discuss in detail, involve branes relevant to string/M-theory. The relevant source Q-manifold $\mathcal N$  is the shifted tangent bundle
\be
\label{realcN}
\mathcal N=T[1]\Sigma\,,\quad D=d\,,
\ee
where $\Sigma$ is a spatial slice of a brane worldvolume (e.g.~of the form $\mathbb R\times \Sigma$ where $\mathbb R$ is time). The ring of functions on this $\cN$ consists of ordinary differential forms on the (real, boundaryless, orientable) manifold $\Sigma$, with $d$ the de Rham differential, and the usual integral for $\int_\cN$. The degree $P$ of the symplectic form on the target QP manifold $\cM$ determines $\dim \Sigma=p$ by
\be
p=P-1\,
\ee
as we will see later. Therefore, a $p$-brane is associated to a target-space QP manifold with symplectic form of degree $P=p+1$.

The zero locus $\mathcal Z_{\bar D-\bar Q}$ in these physically-motivated examples has a degree-zero subspace that turns out to be identical to the usual phase space of the hamiltonian formulation of brane dynamics. Note however that we only consider the bosonic sector, and so this degree-zero subspace is the phase space of the bosonic fields.

We list the brane phase spaces that we discuss in detail in this paper in Table \ref{tablebranes}. We only discuss the branes whose associated QP manifolds have been explored before (by Roytenberg \cite{Roytenberg:2002nu} for the F1, by Ikeda-Uchino \cite{Ikeda:2010vz} and Gr\"utzmann \cite{grutzmann2011h} for the M2\footnote{The relation to M-theory was originally pointed out by K\"ok\,enyesi, Sinkovics, and Szabo \cite{Kokenyesi:2018ynq}.}, and by the author \cite{Arvanitakis:2018cyo} for the D3 and M5) just to keep the paper brief; it is rather straightforward to find QP manifolds for most familiar $p$-branes (including e.g.~the IIA branes D2, D4, NS5A \cite{upcomingWithEmanuel}) at least for $p\leq 5$.
\begin{table}[h]\centering
\begin{tabular}{c|c|c}
Brane& QP manifold $\mathcal M$ & Gen.~Tangent Bundle \\
\hline
F1& $T^\star[2]T[1]M$ & $T\oplus T^\star$ \\
M2& $T^\star[3]T[1]M$ & $T\oplus \Lambda^2T^\star$ \\
D3& $T^\star[4]T[1]M\times \mathbb R^2[2]$ & $T\oplus T^\star\oplus T^\star\oplus \Lambda^3T^\star$ \\
M5& $T^\star[6]T[1]M\times \mathbb R[3]$ & $T\oplus \Lambda^2T^\star\oplus\Lambda^5T^\star$ 
\end{tabular}
\caption{$p$-branes and their associated generalised geometries. All bundles have base $M$. (The source Q-manifold is $\mathcal N=T[1]\Sigma$ where the $p$-dimensional manifold $\Sigma$ is a spatial section of the worldvolume.)}
\label{tablebranes}
\end{table}

Further examples in this class that we do not discuss in this paper include certain $P=2$/Courant algebroid ones. An application involving a heterotic Courant algebroid \cite{Baraglia:2013wua}  --- which is not covered by the results of Alekseev and Strobl \cite{Alekseev:2004np} --- will be discussed in \cite{upcomingWithDandC21}. The WZW current or Kac-Moody algebra is a special case of the construction of section \ref{f1section}, for $M=G$ (a group manifold) with $H$-flux given by the canonical 3-form $(g^{-1}dg)^3$; since it is studied in detail in \cite{Alekseev:2004np}, we do not discuss it. We expect to find interesting higher-dimensional analogues of Kac-Moody algebras (such as those of Cederwall, Ferretti, Nilsson, and Westerberg \cite{Cederwall:1993de}) by specialising the construction of subsection \ref{sec:genericpbrane} to $M=G$ and specialising the $H$-flux to one given by a group cocycle.

It will also be interesting to make contact with the charge algebra of Jur\v{c}o, Schupp, and Vysok\'y \cite{Jurco:2012gc}, which appears to involve a different choice of Q-structure and anchor map to all of the examples considered here.

As a final remark: a very similar general relationship between QP manifolds and brane hamiltonian dynamics, involving real Q-manifolds $\cN$ \eqref{realcN} and the specific choice $\cM=T^\star[P]T[1]M$ of section \ref{sec:genericpbrane} was outlined by \v{S}evera \cite[section 5]{severa2001some} (albeit with a different choice of $\dim \Sigma$; it is unclear to us how to make contact with that formalism).

\paragraph{Complex source Q-manifold examples.} A different class of examples is given by \cite{kontsevich1999rozansky,Qiu:2009zv}
\be
\cN_\text{CY}=T^{0,1}[1]\Sigma\,,\quad D=\bar\pd\,,\quad \int_{\cN}f=\int_\Sigma \Omega\wedge f\,,
\ee
where now $\Sigma$ is a Calabi-Yau manifold with holomorphic volume form $\Omega$ and Dolbeault operator $\bar \pd$. This choice can be paired with any valid choice of target $\cM$ to produce a holomorphic version of any of the previous examples. We have not yet attempted to tabulate interesting examples that arise this way, but it is plausible we could make contact with recent examples of (higher) holomorphic current algebras appearing in the context of holomorphic twists \cite{Gwilliam:2018lpo}.

\section{The universal bracket formula and the brane phase space construction}
\label{sec:2}
We prove the bracket formula \eqref{currentalgebramainresult}, and provide the details of the construction of the brane phase space as the zero locus $\mathcal Z_{\bar D-\bar Q}$.

The starting point is a degree $-1$ graded Poisson bracket
\be
[F,G]
\ee
for $F,G$ functionals on $\maps(\cN\to\cM)$. (This is {\bf not} a BV-style antibracket, but rather a Schouten bracket; see subsection \eqref{sec:pequal1}.) For every test function $\epsilon \in C^\infty(\mathcal N)$, the smeared current 
\be
\langle f|\epsilon\rangle \equiv (-1)^P \int_{\mathcal N} \bm{f} \epsilon
\ee
associated to a function $f\in C^\infty(\cM)$ on the target space $\cM$ is such a functional, of degree
\be
\deg \langle f|\epsilon\rangle = f+\epsilon-(P-1)\,.
\ee
(Here and elsewhere we omit $\deg$ when it is obvious, especially in sign factors.) We use the AKSZ construction to define the bracket $[\bullet\,,\bullet]$ in appendix \ref{app:aksz}. For our purposes we will only need to use the formula
\be
\label{squarebracket:current}
[\langle f|\epsilon\rangle\,,\,\langle g|\eta\rangle]=-(-1)^{(g+P)\epsilon} \langle(f,g)|\epsilon\eta\rangle
\ee
whose more explicit form is
\be
[\langle f|\epsilon\rangle\,,\,\langle g|\eta\rangle](\varphi)=(-1)^{P+1+(g+P)\epsilon}\int_{\cN} \varphi^\star\left( (f,g)\right) \epsilon\eta\,.
\ee
As before, $(\bullet\,,\bullet)$ is the degree $-P$ Poisson bracket on $\cM$, while $\varphi$ denotes an element of $\maps(\cN\to \cM)$. (We could have written $\bm{(f,g)}$ instead of $\varphi^\star\left( (f,g)\right)$, like in \eqref{mycurrentDefforhumans}.)

The Q-structures $Q$ and $D$ on the target and source graded manifolds $\cM$ and $\cN$ respectively lift to Q-structures $\bar Q$ and $\bar D$ on $\maps(\cN\to\cM)$, which act on currents as
\be
\label{main:barQoncurrent}
\bar Q\langle f|\epsilon\rangle=\langle Qf|\epsilon\rangle\,
\ee
and
\be
\bar D \langle f|\epsilon\rangle=(-1)^P\int_\cN (D\bm{f}) \epsilon=(-1)^{P+f+1}\int_\cN \bm{f}D\epsilon\,.
\ee
We used integration by parts \eqref{integrationbyparts} in the last formula. By abusing the bra-ket current notation slightly, we can write
\be
\label{main:barDoncurrent}
\bar D \langle f|\epsilon\rangle=\langle Df|\epsilon\rangle=(-1)^{f+1}\langle f|D\epsilon\rangle\,,
\ee
The Q-structures $\bar D$ and $\bar Q$ mutually anticommute, so
\be
\label{main:Xdef}
Z\equiv \bar D-\bar Q
\ee
is a Q-structure:
\be
Z^2=0\,.
\ee
It is moreover compatible with the Poisson structure $[\bullet\,,\bullet]$:
\be
\label{main:Zcompatibility}
Z[F,G]=[ZF,G]+(-1)^{F+1}[F,ZG]
\ee
because $\bar Q$ and $\bar D$ are. Altogether, we have a graded Poisson algebra over $C^\infty(\maps(\cN\to\cM))$, with degree $-1$ Poisson bracket $[\bullet\,,\bullet]$ and compatible Q-structure $Z$.  

The \emph{zero-locus reduction} of Grigoriev, Semikhatov, and Tipunin \cite{grigoriev2001becchi}, takes the data of a graded Poisson algebra over $\cP$ with compatible Q-structure $Z$ and degree $-n$ Poisson bracket $[\bullet\,,\bullet]$ and produces a graded Poisson algebra over the quotient $\cP/\cI_Z$ (where $\cI_Z$ is the ideal generated by $Z$-exact elements) with degree $-(n-1)$ Poisson bracket $\{\bullet\,,\bullet\}$:
\begin{prop}[Zero-locus reduction \cite{grigoriev2001becchi}]
\label{zlredProp}
$\mathcal P/\mathcal I_Z$ is a graded Poisson algebra with degree $-(n-1)$ Poisson bracket $\{\bullet\,,\bullet\}$ defined as the derived bracket
\be
\{\tilde F,\tilde G\}\equiv \widetilde{[F,Z(G)]}\,,\quad F,G\in\mathcal P\,,
\ee
where $\tilde F\in\mathcal P/\mathcal I_Z$ is the equivalence class
\be
F\sim F+ G Z(H)\,,\qquad F,G,H\in\mathcal P\,.
\ee
\end{prop}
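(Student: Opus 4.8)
The plan is to recognise Proposition \ref{zlredProp} as an instance of the \emph{derived-bracket} construction (in the sense of Kosmann-Schwarzbach), adapted to the graded-\emph{commutative} (Poisson, rather than plain Lie) setting and carrying the extra bookkeeping that $\mathcal P$ also has a commutative product. The three structural inputs I would use are: (i) $Z^2=0$; (ii) $Z$ is a degree-$+1$ derivation of the commutative product on $\mathcal P$ (here $Z=\bar D-\bar Q$ is a genuine homological vector field); and (iii) $Z$ is compatible with $[\bullet,\bullet]$ in the sense of \eqref{main:Zcompatibility}, i.e.\ it is also a derivation of the bracket. Granting these, I would first show that $\{\tilde F,\tilde G\}:=\widetilde{[F,ZG]}$ descends to the quotient, and then verify in turn its degree, graded antisymmetry, Jacobi identity, and the Leibniz rule with respect to the (manifestly well-defined) product on $\mathcal P/\mathcal I_Z$.

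\textbf{Well-definedness.} Since $\mathcal I_Z$ is by construction generated as an ideal by the $Z$-exact elements $Z(H)$, it is automatically an ideal for the commutative product, so $\mathcal P/\mathcal I_Z$ is a graded-commutative algebra; and because $Z$ is a product derivation with $Z^2=0$ one has $Z\big(A\,Z(H)\big)=Z(A)Z(H)$, whence $Z(\mathcal I_Z)\subseteq\mathcal I_Z$, with $Z(\mathcal I_Z)$ spanned by products $Z(A)Z(H)$. For the second slot: if $G\in\mathcal I_Z$ then $ZG$ is a sum of such products, and expanding $[F,Z(A)Z(H)]$ by the Leibniz rule of $[\bullet,\bullet]$ leaves in each term an explicit factor $Z(A)$ or $Z(H)$, so $[F,ZG]\in\mathcal I_Z$; moreover replacing $G$ by $G+Z(K)$ does not change $ZG$ at all (again $Z^2=0$), so the representative in the second slot is immaterial. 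For the first slot: writing $F=A\,Z(H)$, Leibniz gives $[A\,Z(H),ZG]=[A,ZG]\,Z(H)\pm A\,[Z(H),ZG]$, the first term lies in $\mathcal I_Z$, and for the second one combines \eqref{main:Zcompatibility} with $Z^2=0$ to get $[Z(H),ZG]=Z\,[H,ZG]$, which is $Z$-exact. Hence $\{\tilde F,\tilde G\}$ is well-defined on $\mathcal P/\mathcal I_Z$.

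\textbf{Degree and the algebra axioms.} The degree is immediate: $[\bullet,\bullet]$ has degree $-n$ and $Z$ degree $+1$, so $[F,ZG]$ has degree $\deg F+\deg G-(n-1)$. Graded antisymmetry is the one axiom that genuinely requires the quotient and is the crux of the construction: using \eqref{main:Zcompatibility} to move $Z$ off $G$ and then the graded antisymmetry of $[\bullet,\bullet]$ itself, one obtains $[F,ZG]\pm[G,ZF]=\pm Z\,[F,G]\in\mathcal I_Z$ — i.e.\ the derived bracket is antisymmetric (with the expected Koszul sign) only \emph{modulo} $\mathcal I_Z$, which is exactly the defect the ZL reduction cures (cf.\ Remark 2 after the main Proposition). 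The graded Jacobi identity, in contrast, holds already on $\mathcal P$: the derived bracket of a graded Poisson bracket by a square-zero derivation satisfies the Loday/Leibniz form of Jacobi as a standard consequence of $Z^2=0$ and the Jacobi identity for $[\bullet,\bullet]$, and this descends verbatim; together with the antisymmetry just established it becomes the ordinary graded Jacobi identity on the quotient. For the Leibniz/biderivation property, expand $[F,Z(GH)]$ using $Z(GH)=(ZG)H\pm G(ZH)$ and then the Leibniz rule of $[\bullet,\bullet]$; two of the resulting terms carry an explicit $ZG$ or $ZH$ factor and so vanish in $\mathcal P/\mathcal I_Z$, leaving precisely $\{\tilde F,\tilde G\}\tilde H\pm\tilde G\{\tilde F,\tilde H\}$. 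This assembles $\mathcal P/\mathcal I_Z$ into a degree-$-(n-1)$ Poisson algebra.

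\textbf{Main obstacle.} No individual step is deep; the real hazard — and where errors creep in — is the Koszul sign bookkeeping. Because $[\bullet,\bullet]$ is a \emph{shifted} bracket (degree $-n$) and $Z$ is odd, the signs in the compatibility relation, in the antisymmetry defect $[F,ZG]\pm[G,ZF]=\pm Z[F,G]$, and in the derived-bracket Jacobi identity all involve degree shifts by $n$ and must be pinned down consistently. I would therefore fix conventions once (e.g.\ by passing to an unshifted, degree-$0$ bracket on a shift $\mathcal P[\,\cdot\,]$), let the general derived-bracket lemmas carry the sign-tracking, and specialise at the end; this is also why, for the concrete brane applications, it is ultimately cleanest to rerun the argument in local coordinates.
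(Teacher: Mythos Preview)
Your proposal is correct and, at the level of the underlying manipulations (compatibility \eqref{main:Zcompatibility}, $Z^2=0$, Jacobi for $[\bullet,\bullet]$), coincides with the paper's verification. The organisation differs: the paper declares the axioms a ``straightforward direct verification'' and only displays the Jacobi calculation, carried out by hand via hamiltonian vector fields $X_F=[F,\bullet]$ while freely dropping $Z$-exact terms; you instead frame the whole statement as an instance of the Kosmann--Schwarzbach derived-bracket construction and give more attention to well-definedness on the quotient and to the Leibniz/biderivation property, neither of which the paper spells out. Your observation that a Loday form of Jacobi already holds on $\mathcal P$ (before quotienting) is a genuine simplification relative to the paper's calculation --- note however that it is the \emph{right}-Loday form $\{\{F,G\},H\}=\{F,\{G,H\}\}\pm\{\{F,H\},G\}$ that holds on the nose for $\{F,G\}=[F,ZG]$; converting to the left-Loday form \eqref{main:JacobiZLR} that the paper targets still requires the antisymmetry available only on $\mathcal P/\mathcal I_Z$, which your argument correctly supplies. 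What your abstract route buys is a cleaner separation of ``exact on $\mathcal P$'' from ``exact only modulo $\mathcal I_Z$''; what the paper's explicit calculation buys is the Koszul signs pinned down in the conventions used elsewhere in the text.
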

\begin{proof}
Firstly, we calculate the degree of the new bracket:
\be
\deg \{\tilde F,\tilde G\}=\deg F+\deg G + \deg Z -n
\ee
so $\deg \{\bullet\,,\bullet\}= -(n-1)$.

The proof is a straightforward direct verification of the antisymmetry, Leibniz, and Jacobi identities for the new graded Poisson bracket.\footnote{We follow the convention of \cite{Arvanitakis:2018cyo} for these identities. There the Poisson bracket on target space $\cM$ has degree $-p$, where $p$ is the degree of the associated symplectic form, which in this paper is denoted by $P$. For example, the Jacobi identity \eqref{main:JacobiZLR} matches the Jacobi identity of \cite{Arvanitakis:2018cyo} for $p=n-1$.} We display the calculation for the Jacobi identity only. We need to prove
  \be
  \label{main:JacobiZLR}
  \{\tilde F,\{\tilde G,\tilde H\}\}=\{\{\tilde F,\tilde G\},\tilde H\}+(-1)^{(\tilde F+n+1)(\tilde G+n+1)}\{\tilde G,\{\tilde F,\tilde H\}\}\,.
  \ee
We first write $X_F\equiv [F,\bullet]$ for the hamiltonian vector field (in the original bracket) associated to $F\in \cP$, which satisfies
\be
[X_F,X_G]\equiv X_F X_G -(-1)^{X_FX_G} X_G X_F=X_{[F,G]}
\ee
The calculation proving the Jacobi identity is below. We have unceremoniously omitted tildes and Z-exact terms. The tricky steps are the use of $Z^2=0$ and the compatibility condition
\be
Z[F,G]=[ZF,G]+(-1)^{F+n}[F,ZG]
\ee
in the second and fourth lines, and the original bracket Jacobi identity in the fourth line.
\begin{align}
\begin{split}
  \{F,\{G,H\}\}&\sim X_F Z X_G ZH\sim(-1)^{1+(G+n)(H+1+n)}X_FZ X_{ZH} G \\
  &\sim(-1)^{1+(G+n+1)(H+1+n)}X_F X_{ZH}ZG\\
  &\sim(-1)^{1+(G+n+1)(H+1+n)}[X_F,X_{ZH}]ZG+(-1)^{1+(F+G+1)(H+1+n)}X_{ZH}X_F ZG\\
  &\sim(-1)^{1+(G+n+1)(H+1+n)}X_{[F,ZH]}ZG+ (-1)^{1+(F+G)(H+1+n)}\{H,\{F,G\}\}\\
  &\sim(-1)^{1+(G+n+1)(H+1+n)}\big\{\{F,H\},G\big\} + \big\{\{F,G\},H\big\}\,.
\end{split}
\end{align}
We can rearrange the last line to $(-1)^{(F+n+1)(G+n+1)}\{G,\{F,H\}\}+  \big\{\{F,G\},H\big\}$ via antisymmetry of the new bracket, which concludes the proof of the Jacobi identity.
\end{proof}

We arrive at the bracket formula \eqref{currentalgebramainresult} for the current algebra by simply applying the ZL reduction for $\cP=C^\infty(\maps(\cN\to\cM))$, $Z=\bar D-\bar Q$, and $[\bullet\,,\bullet]$ the degree $-1$ bracket of \eqref{squarebracket:current}. It remains to evaluate the bracket explicitly via \eqref{main:barDoncurrent}, \eqref{main:barQoncurrent}, and \eqref{squarebracket:current}:
\be
\begin{split}
\{ \widetilde{\langle f|\epsilon\rangle}\,,\widetilde{\langle f|\epsilon\rangle} \}&=\reallywidetilde{[ {\langle f|\epsilon\rangle}\,, (\bar D-\bar Q) {\langle g|\eta\rangle}]}\\
&=\reallywidetilde{[ {\langle f|\epsilon\rangle}\,, (-1)^{g+1} {\langle g|D\eta\rangle}]}-\reallywidetilde{[ {\langle f|\epsilon\rangle}\,,  {\langle Qg|\eta\rangle}]}\\
&=(-1)^{(g+P)\epsilon+g}\reallywidetilde{\langle (f,g)|\epsilon D\eta\rangle} +(-1)^{(g+1+P)\epsilon}\reallywidetilde{\langle (f,Qg)|\epsilon\eta\rangle}\,.
\end{split}
\ee
This concludes the proof of the universal bracket formula \eqref{currentalgebramainresult}.

\subsection{On perversion}
What if, instead of $Z=\bar D-\bar Q$, we had made a ``perverse'' choice
\be
Z= \alpha \bar D+\beta \bar Q\,,\quad \alpha,\beta\in\mathbb R
\ee
in the ZL reduction? This appears to adjust the relative coefficient of the Schwinger term in the current algebra \eqref{currentalgebramainresult}, which is disturbing. Fortunately this freedom is illusory, outside of the case $\beta=0$ which must be excluded.

We consider the possible cases:
\begin{enumerate}
  \item $\alpha\neq 0\,,\beta \neq 0$: We can write $Z=\alpha(\bar D+(\beta/\alpha)\bar Q)$, and rescale the Q-structure $Q$ on the target $\cM$ to write this as $Z=\alpha(\bar D-\bar Q)$. This is the case $Z=\bar D-\bar Q$, except the Poisson bivector has been rescaled by $\alpha$.
  \item $\alpha=0$: In this case we can again rescale $Q$ to set $\beta=-1$ and find
  \be
  \Big\{ \langle f|\epsilon\rangle,\langle g|\eta\rangle\Big\}=(-1)^{\epsilon(g+P+1)} \langle (f,Qg)|\epsilon\eta\rangle\,.
  \ee
  This says that all brackets close without Schwinger term, which is unexpected. However, in this case alone, most of the currents \eqref{mycurrentDef} turn out to vanish identically after the ZL quotient. For example, if we reinstate the tilde notation for the ZL quotient, the would-be Alexeev-Strobl current \eqref{main:peq1topformcurrent} becomes
  \be
  \widetilde{\langle A|\epsilon\rangle}=\reallywidetilde{\int_{S^1} (v^\mu(\bm{x})\bm{\chi}_\mu + \alpha_\mu(\bm{x})\bm{\psi}^\mu) \epsilon}= \reallywidetilde{\int_{S^1} v^\mu(\bm{x})\bm{\chi}_\mu  \epsilon}\,.
  \ee
  The surviving currents are precisely the ones which have no Schwinger term in their current algebra. In other words, while the current algebra is sensible even if $\alpha=0$, we can no longer work with the convenient currents of the form \eqref{mycurrentDef}.
  \item $\beta=0$: This case yields a different Poisson bracket structure to the other ones. This Poisson bracket will always be highly degenerate, in contrast to $\beta\neq 0$ which will be non-degenerate for reasonable target QP-manifolds $(\cM,\omega,Q)$, except possibly for a finite-dimensional space of zero-modes as we shall see shortly.
\end{enumerate}

\section{Examples of the brane phase space construction}
\subsection{Degree $P=1$ QP manifolds and the Schouten bracket}
\label{sec:pequal1}

Consider a QP manifold $\cM$ with degree 1 symplectic form. It has Darboux coordinates
\begin{table}[H]\centering
\begin{tabular}{c|c|c}
\text{coord} & $x^\mu$ & $\xi_\mu$\\
$\deg$ & $0$  & $1$ 
\end{tabular}
\end{table}
\noindent with $(x^\mu,\xi_\nu)=\delta^\mu_\nu$ and Q-structure arising from a hamiltonian
\be
\label{main:hamiltonianPequal1}
\Theta=\tfrac{1}{2}\xi_\mu\pi^{\mu\nu}(x)\xi_\nu\,.
\ee
In this subsection alone the Grassmann parity {\bf will not} equal the degree modulo two, so that some $x^\mu$ will be anticommuting. If ${\rm par}\,\xi_\mu=1+{\rm par}\, x^\mu$, we can identify the canonical Poisson bracket $(\bullet\,,\bullet)$ (of degree $-1$) with a (super) Schouten bracket, and $\pi^{\mu\nu}$ with a (super) Poisson tensor. Therefore the data of a QP manifold on $\cM$ determines a super Poisson structure on the supermanifold $M$ with local coordinates $x^\mu$. It is not difficult to deduce that $\cM$ is globally the shifted cotangent bundle $\cM=T^\star[1]M$ (see \cite[Proposition 3.1]{Roytenberg:2002nu} for the purely bosonic case).

Per the general pattern, the current algebra construction requires a Q-manifold $\cN$ with non-degenerate measure of degree $1-P=0$. The simplest choice is the 0-dimensional manifold
\be
\cN=\{\text{point}\}\,,\quad D=0\,,
\ee
for which $C^\infty(\cN)\cong\mathbb R$, and the measure is $\int_{\cN}\epsilon=\epsilon$. We then find
\be
\maps(\cN\to\cM)\cong \cM\,,
\ee
so we identify the Poisson bracket $[\bullet\,,\bullet]$ on ``functionals'' with the Schouten bracket $(\bullet\,,\bullet)$.

The vector field $Z$ of the ZL reduction is $Z=-Q$. Its zero locus ideal $\cI_Z$ is generated by
\be
Qx^\mu=-\xi_\nu \pi^{\nu\mu}\,,\quad Q\xi_\mu=\tfrac{(-1)^{\mu(\rho+1)}}{2} \xi_\nu \partial_\mu \pi^{\nu\rho}\xi_\rho\qquad ((-1)^\mu\equiv (-1)^{{\rm par}\,x^\mu})\,.
\ee
When $\pi$ is non-degenerate, the zero locus $\cI_Z$ is equivalent to the locus $\xi_\mu=0$; otherwise, $\xi_\mu=0$ defines a submanifold thereof, coisotropic under the Poisson bracket. The ZL reduction also works when the quotient is taken with respect to such a coisotropic submanifold \cite{grigoriev2001becchi}.  If $f,g\in C^\infty(M)$, the expression
\be
\{f,g\}\equiv (f, -Qg)|_{\xi=0}= \frac{\pd^{\rm right} f}{\pd x^\nu}\pi^{\nu\mu}\frac{\pd g}{\pd x^\mu}
\ee
is the new Poisson bracket produced by the ZL reduction. ($C^\infty(M)$ is a geometrically convenient choice of representatives of the equivalence classes in the quotient.)

Since the currents \eqref{mycurrentDefforhumans} are in this case essentially identical to functions $f\in C^\infty(M)$, we see that for degree $P=1$ QP manifolds the current algebra construction produces a (super) Poisson bracket from the data of a Schouten bracket and a Poisson tensor (encoded as the self-commuting hamiltonian of the form \eqref{main:hamiltonianPequal1}).

We end this somewhat degenerate example with the following observation: 
We saw previously that the general current algebra construction starts from a degree $-1$ graded Poisson bracket {$[\bullet\,,\bullet]$} on $\maps(\cN\to \cM)$. This can be interpreted as a Schouten bracket. However the hamiltonian associated to $Z$ will in general not take the form \eqref{main:hamiltonianPequal1}, because there will exist coordinates with degrees $\neq 0,1$. Therefore, the general current algebra construction is not a special case of the degree $P=1$ construction discussed in this subsection, although it could perhaps be interpreted as the first step in a collection of ``higher Poisson brackets'' associated to a Schouten bracket plus a self-commuting hamiltonian \cite[Example 4.3]{voronov2005higher}.

\subsection{Generic $p$-brane without worldvolume gauge fields}
\label{sec:genericpbrane}
Here we again assume $M$ is an ordinary (real) manifold. For $P\geq2$ we take the target QP manifold
\be
\mathcal M= T^\star[P]T[1] M
\ee
with local homogeneous (in degree) coordinates
\begin{table}[H]\centering
\begin{tabular}{c|c|c|c|c|c|c}
\text{coord} & $x^\mu$ & $\psi^\mu$ & $\chi_\mu$ &  $p_\mu$\\
$\deg$ & $0$  & $1$ & $P-1$ &  $P$
\end{tabular}
\end{table}
\noindent and nonvanishing Poisson brackets
\be
\label{main:genericPbranePBs}
(x^\mu,p_\nu)=-(p_\nu,x^\mu)=\delta^\mu_\nu\,,\quad (\psi^\mu,\chi_\nu)=(-1)^P(\chi_\nu,\psi^\mu)=\delta^\mu_\nu\,.
\ee
The target space Q-structure $Q$ has the hamiltonian
\be
\Theta=-\psi^\mu p_\mu +\tfrac{1}{(P+1)!} H_{\mu_1\mu_2\cdots \mu_{P+1}}(x)\psi^{\mu_1}\psi^{\mu_2}\cdots \psi^{\mu_{P+1}}
\ee
and $Q$ squares to zero when the $(P+1)$-form $H$ is closed. This form will turn out to couple electrically to the brane.

For the source Q-manifold $\cN$ we take
\be
\cN=T[1]\Sigma\,,\quad D=d\,,
\ee
and for the measure $\int_{\cN}$ we take the usual integral of differential forms on the (ordinary) manifold $\Sigma$. This measure has degree $1-P$ when $\dim \Sigma=P-1$. Hence $P$ is the worldvolume dimension (including time), and $p=P-1$. We normalise this measure as
\be
\label{main:realmeasurenormalisation}
\int_{T[1]\Sigma}\frac{1}{p!} \epsilon_{\alpha_1\alpha_2\cdots \alpha_{p}}(\sigma)d\sigma^{\alpha_1}d\sigma^{\alpha_2}\cdots d\sigma^{\alpha_{p}}\equiv \int_{\Sigma} d^p\sigma\; \varepsilon^{\alpha_1\alpha_2\cdots \alpha_{p}}  \frac{1}{p!} \epsilon_{\alpha_1\alpha_2\cdots \alpha_{p}}(\sigma)
\ee
where the invariant Levi-Civita tensor density has $\varepsilon^{12\cdots p}=1$.

An element $\varphi\in \maps(\cN\to\cM)$ is specified in local coordinates by the $4\times(\dim M)$ superfields 
\be
\varphi^\star x^\mu=\bm{x}^\mu\,,\quad\varphi^\star \psi^\mu=\bm{\psi}^\mu\,,\quad
\varphi^\star \chi_\mu=\bm{\chi}_\mu\,,\quad
\varphi^\star p_\mu=\bm{p}_\mu\,.
\ee
These are expanded in local coordinates $\sigma^\alpha\,,d\sigma^\alpha$ on $T[1]\Sigma$ (of degrees 0 and 1 respectively) to define component fields, e.g.
\be
\label{main:genericpbraneSuperfieldExpansion}
\begin{split}
\bm{x}^\mu=\sum_{n=0}^{p}\tfrac{1}{n!} x^\mu_{\alpha_1\alpha_2\cdots \alpha_n}(\sigma) d\sigma^{\alpha_1}d\sigma^{\alpha_2}\cdots d\sigma^{\alpha_n}\,\implies \deg x^\mu_{\alpha_1\alpha_2\cdots \alpha_n}=-n\,,\\
\bm{\chi}_\mu=\sum_{n=0}^{p}\tfrac{1}{n!} \chi_\mu\,_{\alpha_1\alpha_2\cdots \alpha_n}(\sigma) d\sigma^{\alpha_1}d\sigma^{\alpha_2}\cdots d\sigma^{\alpha_n}\,\implies \deg \chi_\mu\,_{\alpha_1\alpha_2\cdots \alpha_n}=p-n\,,
\end{split}
\ee
with similar expansions for $\bm{\psi}^\mu$ and $\bm{p}_\mu$. Both superfields therefore contain component fields with vanishing intrinsic degree:
\be
\label{main:genericpbraneDeg0cptfields1}
\bm{x}^\mu|_{\deg 0}=x^\mu(\sigma)\,,\quad \bm{\chi}_\mu|_{\deg 0}= \tfrac{1}{p!}\chi_\mu\,_{\alpha_1\cdots\alpha_p}(\sigma) d\sigma^{\alpha_1}\cdots d\sigma^{\alpha_p}\,.
\ee 

The ZL reduction with respect to $Z\equiv \bar d-\bar Q$ eliminates $\bm{\psi}^\mu$ and $\bm{p}_\mu$ in favour of $\bm{x}^\mu$ and $\bm{\chi}_\mu$:
\be
\label{main:genericPbraneZLR}
d\bm{x}^\mu=\bm{\psi}^\mu\,,\quad d\bm{\chi}_\mu=(-1)^{p}\bm{p}_\mu+\tfrac{1}{P!} H_{\nu_1\nu_2\cdots \nu_P \mu}(\bm{x})\bm{\psi}^{\nu_1}\bm{\psi}^{\nu_2}\cdots \bm{\psi}^{\nu_{P}}\,.
\ee
(The remaining ZL conditions do not imply further constraints.) Therefore, to determine the Poisson brackets between all component fields, we only need consider the currents $\langle x^\mu|$ and $\langle \chi_\mu|$. The choice of ``test'' function $\epsilon \in C^\infty(T[1]\Sigma)\cong \Lambda^\bullet(\Sigma)$ determines which component field appears in the current. For $\epsilon=\epsilon_{\alpha_1\alpha_2\cdots \alpha_{p-n}}(\sigma)d\sigma^{\alpha_1}d\sigma^{\alpha_2}\cdots d\sigma^{\alpha_{p-n}}/(p-n)!$ we pick up the $n$-form component:
\be
(-1)^{p+1}\langle x^\mu|\epsilon\rangle\equiv\int_{T[1]\Sigma}\bm{x}^\mu\epsilon=\tfrac{p!}{n!(p-n)!}\int_\Sigma d^p\sigma\; \varepsilon^{\alpha_1\alpha_2\cdots \alpha_{p}}  \tfrac{1}{p!} x^\mu_{\alpha_1\alpha_2\cdots\alpha_n}(\sigma)\epsilon_{\alpha_{n+1}\cdots \alpha_p}(\sigma)\,.
\ee

\paragraph{Poisson brackets.} We calculate the Poisson brackets of component fields via the universal bracket formula \eqref{currentalgebramainresult}, that gives
\be
\label{main:XchicurrentNogaugePB}
\{\langle x^\mu|\epsilon_1\rangle\,,\langle \chi_\nu |\epsilon_2\rangle\}=(-1)^p\langle \delta^\mu_\nu|\epsilon_1\epsilon_2\rangle=-\delta^\mu_\nu\int_{T[1]\Sigma}\epsilon_1\epsilon_2\,.
\ee
Evidently this pairs $n$-form with $(p-n)$-form components. To display the brackets of component fields, we rewrite the $\bm{\chi}_\mu$ component fields as the tensor densities
\be
\label{main:momentumdensitydef}
P_\mu^{\alpha_1\alpha_2\cdots\alpha_n}\equiv-\tfrac{1}{(p-n)!}\varepsilon^{\beta_1\beta_2\cdots \beta_{p-n}\alpha_1\cdots\alpha_n}\chi_\mu\,_{\beta_1\beta_2\cdots \beta_{p-n}}\,.
\ee
Then \eqref{main:XchicurrentNogaugePB} gives the Poisson brackets
\be
\label{main:genericbranePosMomentumPB}
\{x^\mu_{\alpha_1\alpha_2\cdots\alpha_n}(\sigma_1)\,, P_\nu^{\beta_1\beta_2\cdots \beta_m}(\sigma_2)\}=n!(-1)^{n(p-n)}\delta_{mn}\delta^\mu_\nu\delta^{\beta_1\beta_2\cdots \beta_m}_{\alpha_1\alpha_2\cdots\alpha_n}\delta^p(\sigma_1-\sigma_2)\,,
\ee
where $0\leq n \leq p$. Similarly, $\{\langle x^\mu|\epsilon_1\rangle\,,\langle x^\nu|\epsilon_2\rangle\}=0$ implies
\be
\label{main:genericbranePosPosPB}
\{x^\mu_{\alpha_1\alpha_2\cdots\alpha_n}(\sigma_1)\,, x^\nu_{\beta_1\beta_2\cdots\beta_n}(\sigma_2)\}=0\,.
\ee
The last bracket, $\{P_\mu^{\alpha_1\alpha_2\cdots\alpha_n}(\sigma_1)\,, P_\nu^{\beta_1\beta_2\cdots\beta_m}(\sigma_2)\}$, is fixed by
\be
\{\langle \chi_\mu|\epsilon_1\rangle\,,\langle \chi_\nu|\epsilon_2\rangle\}=\tfrac{(-1)^p}{p!} \langle H_{\mu\nu\rho_1\rho_2\cdots\rho_p}(x)\psi^{\rho_1}\psi^{\rho_2}\cdots \psi^{\rho_p}|\epsilon_1\epsilon_2\rangle\,.
\ee
Unless $H=0$ the result is too horrific to calculate or display, except for $n=m=0$:
\be
\{P_\mu(\sigma_1)\,,P_\nu(\sigma_2)\}=-\tfrac{1}{p!} H_{\mu\nu\rho_1\rho_2\cdots\rho_p}\varepsilon^{\alpha_1\cdots\alpha_p} \partial_{\alpha_1}x^{\rho_1}\cdots \partial_{\alpha_p}x^{\rho_p}\,.
\ee

We deduce therefore that the zero locus $\mathcal Z_{\bar D-\bar Q}$ in this case is the infinite-dimensional cotangent bundle $T^\star\maps(T[1]\Sigma\to M)$, which is a manifold. The degree-zero component fields
\be
X^\mu(\sigma)\equiv x^\mu(\sigma)\,,\quad P_\mu(\sigma)
\ee
satisfy the Poisson bracket relations
\be
\label{main:genericPbranedeg0PBs}
\begin{split}
\{X^\mu(\sigma_1)\,,P_\nu(\sigma_2)\}=\delta^\mu_\nu \delta^p(\sigma_1-\sigma_2)\,,\quad  \{X^\mu(\sigma_1)\,, X^\nu(\sigma_2)\}=0\,.\\ \{P_\mu(\sigma_1)\,,P_\nu(\sigma_2)\}=-\tfrac{1}{p!} H_{\mu\nu\rho_1\rho_2\cdots\rho_p}\varepsilon^{\alpha_1\cdots\alpha_p} \partial_{\alpha_1}X^{\rho_1}\cdots \partial_{\alpha_p}X^{\rho_p}\,.
\end{split}
\ee
For $H=0$ these are the canonical Poisson brackets of $p$-brane positions and momenta. Otherwise, they account for the contribution to the symplectic structure due to an electric coupling $\int_{\mathbb R\times \Sigma} H$ involving the background $(p+2)$-form $H$. This degree zero sector corresponds to the body of $T^\star\maps(T[1]\Sigma\to M)$ --- in the supermanifold sense --- which is just the cotangent bundle of the higher loop space $\maps(\Sigma\to M)$. $T^\star\maps(\Sigma\to M)$ is the usual $p$-brane phase space for a worldvolume $\mathbb R\times \Sigma$ (in the absence of worldvolume gauge fields).


\paragraph{Current algebra.} We consider the current algebra of the zero-degree currents of the form \eqref{mycurrentDefforhumans}. The top form ($p$-form) currents $\langle A|$ arise from functions $A$ of degree $\deg A=P-1=p$ on $\cM$, which correspond to sections $(v,\alpha)$ of $TM\oplus \Lambda^pT^\star M$:
\be
A=- v^\mu(x)\chi_\mu +\tfrac{1}{p!}\alpha_{\mu_1\cdots\mu_p}(x)\psi^{\mu_1}\cdots \psi^{\mu_p}\,.
\ee
The corresponding current, smeared against a 0-form $\epsilon=\epsilon(\sigma)$, takes the form
\be
\langle A|\epsilon\rangle=(-1)^{p+1}\int_{T[1]\Sigma}  \left(-v^\mu(\bm{x}) \bm{\chi}_\mu +\tfrac{1}{p!}\alpha_{\mu_1\cdots\mu_p}(\bm{x})d\bm{x}^{\mu_1}\cdots d\bm{x}^{\mu_p}\right)\epsilon(\sigma)
\ee
where we already imposed the ZL reduction \eqref{main:genericPbraneZLR} to express the current in terms of the independent superfields $\bm{x}^\mu(\sigma,d\sigma)$ and $\bm{\chi}_\mu(\sigma,d\sigma)$. The importance of the degree-zero currents is that they have a non-vanishing contribution even after we set the $\deg\neq 0$ component fields in $\bm{x}^\mu$ and $\bm{\chi}_\mu$ to zero (as in \eqref{main:genericpbraneDeg0cptfields1}): after resolving the $\int_{T[1]\Sigma}$ integral, we find
\be
\label{main:genericPbranedeg0current}
\langle A|\epsilon\rangle|_{\deg 0}=(-1)^{p+1}\int_\Sigma d^p\sigma \left(v^\mu({X}) P_\mu +\tfrac{1}{p!}\varepsilon^{\alpha_1\cdots \alpha_p}\alpha_{\mu_1\cdots\mu_p}(X)\partial_{\alpha_1}{X}^{\mu_1}\cdots \partial_{\alpha_p}{X}^{\mu_p}\right)\epsilon(\sigma)\,.
\ee
These currents and their current algebra were considered by Zabzine and Bonelli \cite{Bonelli:2005ti}. We will see shortly that they reduce to the Alekseev-Strobl currents for $p=1$.

If $(v,\alpha),(u,\beta)$ are sections of $TM\oplus \Lambda^pT^\star M$ defining the degree $p$ functions
\be
A=- v^\mu(x)\chi_\mu +\tfrac{1}{p!}\alpha_{\mu_1\cdots\mu_p}(x)\psi^{\mu_1}\cdots \psi^{\mu_p}\,,\quad B=- u^\mu(x)\chi_\mu +\tfrac{1}{p!}\beta_{\mu_1\cdots\mu_p}(x)\psi^{\mu_1}\cdots \psi^{\mu_p}
\ee
we calculate the following expression that appears in the current algebra using the Poisson brackets \eqref{main:genericPbranePBs}
\be
\label{main:genericPbranederivedbracket}
\begin{split}
(-1)^{p+1}(B,QA)=(v^\nu \pd_\nu u^\mu-u^\nu \pd_\nu v^\mu)(-\chi_\mu) \\ + \Big(v^\nu \pd_\nu\beta_{\mu_1\cdots \mu_p}+p\beta_{\nu\mu_2\cdots\mu_p}\pd_{\mu_1}v^\nu - (p+1) u^\nu\partial_{[\nu}\alpha_{\mu_1\cdots\mu_p]} + (-1)^{p+1} v^\nu u^\rho H_{\nu\rho\mu_1\cdots\mu_p} \Big)\tfrac{\psi^{\mu_1}\psi^{\mu_2}\cdots \psi^{\mu_p}}{p!} \,.  
\end{split}
\ee
Notwithstanding the sign factor $(-1)^{p+1}$, this is the Vinogradov bracket \cite[Proposition 5.9]{Gruetzmann:2014ica}\cite{Ritter:2015ffa}\footnote{Sometimes ``Vinogradov bracket'' refers to the antisymmetrisation of this bracket.}
\be
(-1)^{p+1}[(v,\alpha),(u,\beta)]_{\rm V}= ([v,u], {\mathcal L}_v\beta-\iota_{u}d\alpha +(-1)^{p+1}\iota_v\iota_{u} H)\,.
\ee
We also need the following expression, which is also part of the definition of a Vinogradov algebroid:
\be
\label{main:genericPbraneinnerproduct}
(A,B)=(-1)^p (u^\mu\alpha_{\mu\nu_1\cdots \mu_p}+v^\mu\beta_{\mu\nu_1\cdots\mu_{p-1}})\tfrac{\psi^{\mu_1}\psi^{\mu_2}\cdots \psi^{\mu_{p-1}}}{(p-1)!}=(B,A)\,.
\ee

Finally, the bracket formula \eqref{currentalgebramainresult} yields, for the case $\deg \epsilon=\deg \eta=0$ of interest,
\be
\label{main:genericPbranecurrentalgebra1}
\{\langle A|\epsilon\rangle \,, \langle B|\eta\rangle\}=-\langle (B,QA)|\eta\epsilon\rangle+(-1)^{p+1} \langle(A,B)|\eta d\epsilon\rangle\,,
\ee
where
\be
\label{main:genericPbranecurrentalgebra2}
\begin{split}
-\langle (B,QA)|\eta\epsilon\rangle=-\int_{T[1]\Sigma} \big([v,u]^\mu(-\bm{\chi}_\mu)+\dots\big)\eta \epsilon\\
(-1)^{p+1} \langle(A,B)|\eta d\epsilon\rangle=\int_{T[1]\Sigma}(-1)^p (u^\mu\sigma_{\mu\nu_1\cdots \mu_p}+v^\mu\beta_{\mu\nu_1\cdots\mu_{p-1}})\tfrac{d\bm{x}^{\mu_1}d\bm{x}^{\mu_2}\cdots d\bm{x}^{\mu_{p-1}}}{(p-1)!}\eta d\epsilon\,.
\end{split}
\ee
By taking the degree-zero $|_{\deg 0}$ parts of all superfields, we immediately recover the result of \cite{Bonelli:2005ti} for the currents \eqref{main:genericPbranedeg0current}. In other words, we have shown that the canonical $p$-brane Poisson brackets \eqref{main:genericPbranedeg0PBs} generate a current algebra that is controlled by the geometry of the Vinogradov algebroid over $TM\oplus \Lambda^pT^\star M$. The term $\langle(A,B)|\eta d\epsilon\rangle$ is recognised as a Schwinger term when the $\int_{T[1]\Sigma}$ expression is resolved:
\be
\langle(A,B)|\eta d\epsilon\rangle|_{\deg 0}=-\int_{\Sigma}d^p\sigma (u^\mu\alpha_{\mu\nu_1\cdots \mu_p}+v^\mu\beta_{\mu\nu_1\cdots\mu_{p-1}})\frac{\varepsilon^{\alpha_1\cdots\alpha_{p-1}\beta}}{(p-1)!} \partial_{\alpha_1}X^\mu_1 \cdots \partial_{\alpha_{p-1}}X^{\mu_{p-1}}\eta \partial_\beta\epsilon\,.
\ee

There also exist further degree-zero currents \eqref{mycurrentDefforhumans}, which are $n$-forms for $0\leq n<p$, arising from functions of the form
\be
\lambda_{\mu_1\cdots\mu_n}(x)\psi^{\mu_1}\cdots\psi^{\mu_n}\,.
\ee
For $n=p-1$ these appear in the Schwinger term $\langle(A,B)|\eta d\epsilon\rangle$. More generally, formula \eqref{currentalgebramainresult} implies that the Schwinger term in the Poisson bracket of an $n$-form and an $m$-form bracket will involve an $(n+m-p)$-form current.

\subsubsection{F1 brane, or the Alekseev-Strobl current algebra ($p=1$)}
\label{f1section}
For $p=1$, the top-form currents \eqref{main:genericPbranedeg0current} (corresponding to 1-forms on the circle $S^1$) take the form
\be
\label{main:peq1topformcurrent}
\langle A|\epsilon\rangle|_{\deg 0}=\int_\Sigma d^1\sigma \left(v^\mu({X}) P_\mu +\alpha_{\mu}(X)\partial_{\sigma}{X}^{\mu}\right)\epsilon(\sigma)=\int_\Sigma d^1\sigma \: J(v,\alpha)(\sigma)\epsilon(\sigma) \,.
\ee
These are the Alekseev-Strobl string currents \cite{Alekseev:2004np}. The Poisson brackets of the degree-zero component fields $X^\mu(\sigma),P_\nu(\sigma)$ are defined by \eqref{main:genericPbranedeg0PBs} for $p=1$:
\be
\begin{split}
\{X^\mu(\sigma_1)\,,P_\nu(\sigma_2)\}=\delta^\mu_\nu \delta^p(\sigma_1-\sigma_2)\,,\quad  \{X^\mu(\sigma_1)\,, X^\nu(\sigma_2)\}=0\,.\\ \{P_\mu(\sigma_1)\,,P_\nu(\sigma_2)\}=- H_{\mu\nu\rho} \partial_{\sigma}X^{\rho}\,.
\end{split}
\ee
Then formulas \eqref{main:genericPbranecurrentalgebra1}, \eqref{main:genericPbranecurrentalgebra2}, \eqref{main:genericPbranederivedbracket}, and \eqref{main:genericPbraneinnerproduct} determine the Poisson bracket $\{ \langle A|\epsilon\rangle \,, \, \langle B|\eta\rangle\}$
of the smeared current \eqref{main:peq1topformcurrent} with another such current $\langle B|\eta\rangle= \int_\Sigma d^1\sigma \: J(u,\beta)(\sigma)\eta(\sigma)$. We recover therefore the Alekseev-Strobl current algebra formula (quoted in the introduction \eqref{AScurrentalgebra} for the special case of no H-flux).

For this case  of $p=1$, the space $\maps(T[1]\Sigma\to M)$ has been interpreted as a superspace for standard $N=1$ worldsheet supersymmetry by Zabzine \cite{Zabzine:2005qf}. For $p>1$ however a similar interpretation appears to be unavailable. Presumably this is related to the notorious difficulties in the construction of ``spinning'' or ``RNS''-style membrane and higher brane lagrangians.

\subsubsection{M2 brane}
We point out for completeness that for $p=2$, we have $\dim \Sigma=2$, and the degree-zero subspace of the zero locus is the M2 brane phase space $T^\star\maps(\Sigma\to M)$, with a symplectic structure that will in general depend on the 4-form flux $H$.

\subsection{M5 brane}
Here the source Q manifold is as in the previous case, $\cN=T[1]\Sigma$, but with $\dim\Sigma=5$, while the target QP manifold is:
\be
\mathcal M= T^\star[6]T[1] M\times \mathbb R[3]
\ee
with local homogeneous (in degree) coordinates
\begin{table}[H]\centering
\begin{tabular}{c|c|c|c|c|c|c|c}
\text{coord} & $x^\mu$ & $\psi^\mu$ & $\zeta$ & $\chi_\mu$ &  $p_\mu$\\
$\deg$ & $0$  & $1$ & $3$ & $5$ &  $6$
\end{tabular}
\end{table}
\noindent and nonvanishing Poisson brackets
\be
\label{main:M5PBs}
(x^\mu,p_\nu)=-(p_\nu,x^\mu)=\delta^\mu_\nu\,,\quad (\psi^\mu,\chi_\nu)=(\chi_\nu,\psi^\mu)=\delta^\mu_\nu\,,\qquad (\zeta,\zeta)=1
\ee
The target space Q-structure $Q$ has the hamiltonian
\be
\label{main:M5hamiltonianwithfluxes}
\Theta=-\psi^\mu p_\mu +\tfrac{1}{7!} F_{\mu_1\mu_2\cdots \mu_{7}}(x)\psi^{\mu_1}\psi^{\mu_2}\cdots \psi^{\mu_{7}}+\tfrac{1}{4!} G_{\mu_1\mu_2\cdots \mu_{4}}(x)\psi^{\mu_1}\psi^{\mu_2}\cdots \psi^{\mu_{4}}\zeta
\ee
and $Q$ squares to zero when
\be
dG=0\,,\quad dF+\tfrac{1}{2}G\wedge G=0\,.
\ee
The symplectic structure on $\cM$ evidently has degree $P=6$, so that $p=5$. This QP structure was written down in \cite{Arvanitakis:2018cyo}, which matches the convention here under $\psi\to-\psi,\chi\to-\chi$.

The novel ingredient compared to section \ref{sec:genericpbrane} is the coordinate $\zeta$ along $\mathbb R[3]$ of degree 3. The corresponding superfield is
\be
\bm{\zeta}(\sigma,d\sigma)=\sum_{n=0}^5 \tfrac{1}{n!}\zeta_{\alpha_1\cdots\alpha_n}(\sigma) d\sigma^{\alpha_1}\cdots d\sigma^{\alpha_n} \implies \deg\zeta_{\alpha_1\cdots\alpha_n}=3-n\,.
\ee
It is also convenient to work with the dualised component fields
\be
\label{main:M5branezetadualisedcptfields}
\zeta^{\alpha_1\cdots\alpha_n}=\tfrac{1}{(5-n)!}\varepsilon^{\alpha_1\cdots\alpha_n\beta_1\cdots\beta_{5-n}} \zeta_{\beta_1\cdots\beta_{5-n}}\implies \deg \zeta^{\alpha_1\cdots\alpha_n}=n-2\,.
\ee
Formula \eqref{currentalgebramainresult} yields their mutual brackets: since $\{ \langle\zeta|\epsilon\rangle\,,\langle \zeta|\eta\rangle\}=(-1)^{\epsilon+1}\langle 1| \epsilon d\eta\rangle$, we find
\be
\{\zeta^{\alpha_1\cdots\alpha_n}(\sigma_1)\,,\zeta^{\beta_1\cdots\beta_m}(\sigma_2)\}=-\varepsilon^{\gamma\alpha_1\cdots\alpha_n\beta_1\cdots\beta_m}\frac{\pd}{\pd\sigma_1^\gamma}\delta^5(\sigma_1-\sigma_2) \delta_{m,4-n}\,.
\ee

To discuss the current algebra and ZL reduction it is convenient to first switch off the 4-form and 7-form background fluxes $G$ and $F$:

\paragraph{Poisson brackets for vanishing fluxes.} In this case
\be
\Theta=-\psi^\mu p_\mu\,.
\ee
The ZL reduction reads
\be
\label{M5:ZLreductionfluxless}
d\bm{x}^\mu=\bm{\psi}^\mu\,,\quad d\bm{\chi}_\mu=-\bm{p}_\mu\,,\quad d\bm{\zeta}=0\,.
\ee
The independent superfields are $\bm{x}^\mu\,,\bm{\chi}_\mu$ and $\bm{\zeta}$, where the latter is a closed polyform:
\be
\label{main:M5braneZLreductionZeta}
d\bm{\zeta}=0\iff \partial_{[\alpha_1}\zeta_{\alpha_2\cdots\alpha_{n+1}]}=0\iff \partial_\alpha\zeta^{\alpha\beta_1\cdots \beta_n}=0\,.
\ee
The non-vanishing Poisson brackets of the component fields are
\be
\label{main:M5branePBsNoflux}
\begin{split}
\{x^\mu_{\alpha_1\alpha_2\cdots\alpha_n}(\sigma_1)\,, P_\nu^{\beta_1\beta_2\cdots \beta_m}(\sigma_2)\}=n!(-1)^{n(p-n)}\delta_{mn}\delta^\mu_\nu\delta^{\beta_1\beta_2\cdots \beta_m}_{\alpha_1\alpha_2\cdots\alpha_n}\delta^5(\sigma_1-\sigma_2)\,,\\ \{\zeta^{\alpha_1\cdots\alpha_n}(\sigma_1)\,,\zeta^{\beta_1\cdots\beta_m}(\sigma_2)\}=-\varepsilon^{\gamma\alpha_1\cdots\alpha_n\beta_1\cdots\beta_m}\frac{\pd}{\pd\sigma_1^\gamma}\delta^5(\sigma_1-\sigma_2) \delta_{m,4-n}\,,
\end{split}
\ee
where $P_\mu^{\alpha_1\cdots\alpha_n}$ was defined in terms of $\bm{\chi}_\mu$ in \eqref{main:momentumdensitydef}, and the component fields for $\bm{x}$ and $\bm{\zeta}$ in \eqref{main:genericpbraneSuperfieldExpansion} and \eqref{main:M5branezetadualisedcptfields} respectively.

This Poisson structure is degenerate. To see this, select a Riemannian metric on $\Sigma$ (perhaps the one induced by a Lorentzian metric on the worldvolume $\mathbb R\times \Sigma$) to produce a Hodge decomposition:
\be
C^\infty(T[1]\Sigma)\cong\Lambda^\bullet(\Sigma)=E\oplus C\oplus H
\ee
where $E$, $C$, and $H$ are $d$-exact, co-exact, and harmonic forms respectively. $\Lambda^\bullet(\Sigma)$ is a (graded) symplectic vector space with non-degenerate (graded) symplectic form $\omega$
\be
\omega(\epsilon,\eta)\equiv\int_{T[1]\Sigma} \epsilon\eta
\ee
and the harmonic forms are symplectically-orthogonal to the rest:
\be
\omega(H,E)=\omega(H,C)=0\,.
\ee
Therefore the current $\langle\zeta|\eta\rangle$ depends only on the harmonic part of the superfield $\bm{\zeta}$ if $\eta$ is selected to be harmonic. However, when that is the case, we calculate
\be
\{ \langle\zeta|\epsilon\rangle\,,\langle \zeta|\eta\rangle\}=(-1)^{\epsilon+1}\langle 1| \epsilon d\eta\rangle=0\qquad \forall \epsilon \in C^\infty(T[1]\Sigma)
\ee
since $d\eta=0$ for all harmonic forms $\eta$. Thus, the harmonic part of the superfield $\bm{\zeta}$ drops out of all Poisson brackets.

 Since $\omega$ is non-degenerate on the subspace of exact and co-exact forms $E\oplus C$, we deduce similarly that the Poisson structure \eqref{main:M5branePBsNoflux} is non-degenerate on exact $\bm{\zeta}$.

\paragraph{M5-brane phase space and self-duality.} Let us make contact with the hamiltonian formulation of the bosonic M5 brane \cite{Bergshoeff:1998vx}. The canonical variables are the brane position $X^\mu(\sigma)$, the momentum density $P_\nu(\sigma)$,  the spatial component $A_{\alpha_1\alpha_2}(\sigma)$ of the 2-form gauge field on the worldvolume, and its conjugate momentum $\Pi^{\alpha_1\alpha_2}(\sigma)$. The lagrangian density is
\be
P_\mu \dot X^\mu + \tfrac{1}{2}\Pi^{\alpha_1\alpha_2}\dot A_{\alpha_1\alpha_2} -\lambda_{\alpha_1\alpha_2}(\Pi^{\alpha_1\alpha_2}- T\varepsilon^{\alpha_1\alpha_2\beta_1\beta_2\beta_3}\partial_{\beta_1}A_{\beta_2\beta_3}) +\dots
\ee
where $\lambda_{\alpha_1\alpha_2}$ is a lagrange multiplier, $T$ is the brane tension, and we omitted the worldvolume diffeomorphism constraints and their lagrange multipliers.

The expression
\be
\label{main:selfdualityM5}
\Pi^{\alpha_1\alpha_2}- T\varepsilon^{\alpha_1\alpha_2\beta_1\beta_2\beta_3}\partial_{\beta_1}A_{\beta_2\beta_3}=0
\ee
defines a mix of second and first-class constraints, of which the latter imply the Gauss law for this 2-form electrodynamics \cite{Bengtsson:1996fm}. Passing to (partial) Dirac brackets yields
\be
\{\Pi^{\alpha_1\alpha_2}(\sigma_1)\,,\Pi^{\beta_1\beta_2}(\sigma_2)\}_{\rm Dirac}\propto-\varepsilon^{\gamma\alpha_1\alpha_2\beta_1\beta_2}\frac{\pd}{\pd\sigma_1^\gamma}\delta^5(\sigma_1-\sigma_2)\,.
\ee
This is identical to the bracket \eqref{main:M5branePBsNoflux} for the degree zero component fields $\zeta^{\alpha_1\alpha_2}$. Since the bracket \eqref{main:M5branePBsNoflux} is only non-degenerate when the component fields are $d$-exact, we should be writing
\be
\zeta_{\alpha_1\alpha_2\alpha_3}\propto \partial_{[\alpha_1}A_{\alpha_2\alpha_3]}
\ee
for some 2-form potential $A_{\alpha_1\alpha_2}$. After dualisation, we find
\be
\zeta^{\alpha_1\alpha_2} \propto\varepsilon^{\alpha_1\alpha_2\beta_1\beta_2\beta_3}\partial_{\beta_1}A_{\beta_2\beta_3}\,.
\ee
Therefore the current algebra construction for $\cM=T^\star[6]T[1]M\times \mathbb R[3]$ contains the \emph{reduced} bosonic M5-brane phase space, where the constraint \eqref{main:selfdualityM5} has been imposed.

That constraint is simply the self-duality of the gauge field on the M5-brane worldvolume, expressed in a time/space split. If $H$ is the 3-form field strength on $\mathbb R\times \Sigma$ with $\mathbb R$ the time $t$ direction, $H$ will decompose as $H=\Pi dt + \zeta$, where $\zeta$ is a spatial 3-form $\zeta\in\Lambda^3(\Sigma)$, and $\Pi$ is a 2-form $\Pi\in\Lambda^2(\Sigma)$. Assuming for simplicity that the 6-dimensional worldvolume metric is block-diagonal, we see immediately that a self-duality condition of the form $\star_6 H\propto H$ will produce $\star_5\zeta \propto \Pi$, which is the constraint \eqref{main:selfdualityM5} modulo index gymnastics.\footnote{While \eqref{main:selfdualityM5} is valid regardless of the form of the 6-dimensional worldvolume metric, in general the canonical variables appearing here will have a more complicated relation to the 3-form $H$.}

\paragraph{Current algebra for vanishing fluxes.}
The top-form currents correspond to $A\in C^\infty(\cM)$ of degree $p=5$:
\be
\label{main:M5braneDeg5function}
A= v^\mu(x)(-\chi_\mu) +\tfrac{1}{2}\omega_{\mu_1\mu_2}(x)\psi^{\mu_1}\psi^{\mu_2} \zeta+\tfrac{1}{5!}\sigma_{\mu_1\cdots\mu_5}(x)\psi^{\mu_1}\cdots \psi^{\mu_5}\,.
\ee
This is a section of the bundle $E$
\be
E=(T\oplus \Lambda^2T^\star\oplus \Lambda^5T^\star)M
\ee
of generalised vectors for $E_{6(6)}$ exceptional generalised geometry (for $\dim M=6$) or
$$E_{d(d)}=\mathrm{Spin}(5,5),\,\mathrm{SL}(5),\,\dots$$ (for $d=5,4,\dots$), in an M-theory section \cite{Pacheco:2008ps,Hull:2007zu}. Other dimensions $d$ are allowed in our construction, but the fibres of $E$ will fail to admit an action of the group $E_{d(d)}$. The geometric expressions $(A',QA)$ and $(A',A)$ that enter the current algebra were explicitly calculated in \cite{Arvanitakis:2018cyo}. They correspond to the generalised Lie derivative
\begin{equation}
\begin{split}(A',QA)=L_{A}A^{\prime} & =\mathcal{L}_{v}v^{\prime}+(\mathcal{L}_{v}\omega^{\prime}-\imath_{v^{\prime}}d\omega)+(\mathcal{L}_{v}\sigma^{\prime}-\imath_{v^{\prime}}d\sigma-\omega^{\prime}\wedge d\omega)
\end{split}
\label{eq:M_Dorf_vector}
\end{equation}
and the symmetric map $\times_N:E\otimes E\to N\equiv (T^\star \oplus\Lambda^4T^\star)M$ (sometimes denoted by $\bullet$) related to the symmetric part of the generalised Lie derivative\footnote{See \cite{Coimbra:2011ky}, but note that this $N$ is smaller than the one appearing in \cite{Coimbra:2011ky} for $d=6$.}
\begin{align}
(A,A')&= -(v'\,^\mu\omega_{\mu\nu}+v^\mu\omega'_{\mu\nu})\psi^\nu\zeta -\tfrac{\psi^{\nu_1}\cdots \psi^{\nu_4}}{4!}(v'\,^\mu\sigma_{\mu\nu_1\cdots\nu_4}+v^\mu\beta_{\mu\nu_1\cdots\nu_4}-6 \omega_{\nu_1\nu_2}\omega'_{\nu_3\nu_4})\\
&=-(\iota_{v'}\omega + \iota_v\omega') +(-\iota_{v'}\sigma - \iota_v\beta+\omega\wedge \omega')=A\times_N A'\,.
\end{align}
Therefore, the universal bracket formula \eqref{currentalgebramainresult} gives (for $\deg\epsilon=\deg\eta=0$)
\be
\{\langle A|\epsilon\rangle\,, \langle A'|\eta\rangle\}=-\langle L_A A'|\epsilon\eta\rangle+\langle A\times_NA'|\eta d\epsilon\rangle\,.
\ee
By taking the degree zero $|_{\deg 0}$ parts of all superfields, this reduces to the Poisson bracket formula given by Hatsuda et al.~\cite{Hatsuda:2013dya} for the M5-brane currents
\be
\begin{split}
\langle A|\epsilon\rangle|_{\deg 0}=\\
\int_\Sigma d^5\sigma \; \epsilon(\sigma)\left(v^\mu(X)P_\mu + \tfrac{1}{2} \omega_{\mu\nu}(X) \partial_{\alpha_1}X^\mu\partial_{\alpha_2}X^\nu \zeta^{\alpha_1\alpha_2} + \tfrac{1}{5!} \varepsilon^{\alpha_1\dots\alpha_5} \sigma_{\mu_1\dots\mu_5}(X)\partial_{\alpha_1}X^{\mu_1}\cdots\right)\,.
\end{split}
\ee
The current algebra construction therefore selects the  worldvolume 5-form variable
\be
\label{m5:zM}
Z_M\equiv\begin{pmatrix} P_\mu \\ \zeta^{\alpha_1\alpha_2}\partial_{\alpha_1} X^\mu \partial_{\alpha_2}X^\nu\\  \varepsilon^{\alpha_1\dots\alpha_5} \sigma_{\mu_1\dots\mu_5}(X)\partial_{\alpha_1}X^{\mu_1}\cdots \partial_{\alpha_5}X^{\mu_5}
\end{pmatrix}
\ee
which transforms in the $R_1$ representation of $E_{d(d)}$ (for $d=\dim M\leq 6$), in a close analogy to expression \eqref{ZMAlekseevStrobl} for strings.

The $n<p$-form currents \eqref{mycurrentDefforhumans} correspond directly to the rest of the tensor hierarchy of exceptional generalised geometry/exceptional field theory in M-theory sections; we discuss this in section \ref{sec:tensorhierarchy}. 

\paragraph{Current algebra in the presence of fluxes.} We now outline the case where 4- and 7-form background fluxes $G$ and $F$ are switched on in \eqref{main:M5hamiltonianwithfluxes}. These lead to very complicated Poisson brackets between $\zeta$ and $P$ (when $G\neq 0$) and of $P$ with itself (when $F\neq 0$ or $G\neq 0$; identical to \eqref{main:genericPbranedeg0PBs} when $G=0$).  Those brackets capture the flux contributions to the M5-brane symplectic structure through the Wess-Zumino coupling $\int_Y F+G\wedge H$, where $Y$ is a 7-manifold whose boundary is the M5 worldvolume, and $H$ is the field strength of the self-dual gauge field as described above.

The current algebra of e.g.~top-form currents corresponding to generalised vectors \eqref{main:M5braneDeg5function} will take the same form as before (again for $\deg \epsilon=\deg \eta=0$):
\be
\{\langle A|\epsilon\rangle\,, \langle A'|\eta\rangle\}=-\langle L_A A'|\epsilon\eta\rangle+\langle A\times_NA'|\eta d\epsilon\rangle\,,
\ee
where $L_A A'$ will now be the \emph{twisted} generalised Lie derivative where $F$ and $G$ appear explicitly (see e.g.~\cite[appendix E.1]{Ashmore:2015joa}): indeed the flux contributions in \eqref{main:M5hamiltonianwithfluxes} yield
\be
\label{m5:AprimeQAwithfluxes}
(A',Q A)=(A', Q_0 A) + (\iota_{v'}\iota_v G)\zeta + (\iota_{v'}\iota_{v} F -\iota_{v'}(G \wedge\omega) + (\iota_v G)\wedge\omega')
\ee
where $(A', Q_0 A)$ is the flux-free contribution to the generalised Lie derivative of formula \eqref{eq:M_Dorf_vector}.

The ZL reduction will also be affected by the fluxes, similarly to \eqref{main:genericPbraneZLR}. For $\bm{\zeta}$ in particular the ZL condition is
\be
\label{main:M5zetaZLconditionwithflux}
d\bm{\zeta}=\tfrac{1}{4!} G_{\mu_1\mu_2\cdots \mu_{4}}(\bm{x})\bm{\psi}^{\mu_1}\bm{\psi}^{\mu_2}\cdots \bm{\psi}^{\mu_{4}}\,.
\ee
This can be solved by again employing a Hodge decomposition to separate $\bm{\zeta}$ into co-exact and closed polyforms on $\Sigma$. The co-exact component is uniquely fixed by \eqref{main:M5zetaZLconditionwithflux} in terms of the target 4-form flux, while the exact and harmonic components remain as independent worldvolume fields. We thus have the same field content as in the case of vanishing fluxes.

We now outline how to work with the simpler brackets \eqref{main:M5branePBsNoflux} even when fluxes are present. If we denote $\Theta_0=-\psi^\mu p_\mu$, we can write the ``twist'' as a canonical transformation of the target QP manifold $\cM$ involving local potentials $C_3$ and $C_6$ for $G$ and $F$:
\be
\begin{split}
\Theta=e^\Phi(\Theta_0)\,,\quad \Phi= C_{\mu_1\mu_2\mu_3}(x)\psi^{\mu_1}\psi^{\mu_3}\psi^{\mu_3}\zeta+ C_{\mu_1\cdots\mu_6}(x)\psi^{\mu_1}\cdots \psi^{\mu_6}\,,\\
e^\Phi(f)\equiv f+ (\Phi,f)+\tfrac{1}{2}(\Phi,(\Phi,f))+\dots
\end{split}
\ee
The transformation $e^\Phi$ preserves the $\cM$ bracket $(\bullet\,,\bullet)$. Therefore, for $Q_0\equiv(\Theta_0\,,\bullet)$,
\be
\{\langle e^\Phi f|\epsilon\rangle\,, \langle e^\Phi g|\eta\rangle\} =(-1)^{\epsilon(g+p)}\langle e^\Phi(f,Q_0g)|\epsilon\eta\rangle+(-1)^{\epsilon(g+p+1)+g}\langle e^\Phi(f,g)|\epsilon d\eta\rangle
\ee
from which we find e.g.~$\{\langle e^\Phi \chi|\epsilon\rangle\,, \langle e^\Phi \chi|\eta\rangle\}=0$. We infer thereby a field redefinition involving the potentials that expresses all currents in terms of variables with simple (zero-flux) Poisson bracket relations \eqref{main:M5branePBsNoflux}. When the 4- and 7-form fluxes $G$ and $F$ are non-trivial, this redefinition will only be valid locally.

The above provides an indirect check that the Poisson brackets of our construction indeed capture the effects of the 4- and 7-form fluxes on the M5 brane symplectic structure: since the zero-flux brackets \eqref{main:M5branePBsNoflux} were explicitly checked to match the M5 brane ones (for the degree-zero component fields $X^\mu,P_\mu$ and $\zeta^{\alpha_1\alpha_2}$), we can imagine introducing flux by a (local) field redefinition involving, say, the 3-form potential alone:
\begin{align}
\label{fieldredefinition}
\begin{split}
X^\mu\to X^\mu\,,\quad P_\mu\to  P_\mu + \zeta^{\alpha_1\alpha_2} C_{\mu\nu_1\nu_2} \partial_{\alpha_1}X^{\nu_1} \partial_{\alpha_2}X^{\nu_2} + \mathcal O(C^2)\,,\\
\zeta^{\alpha_1\alpha_2}\to \zeta^{\alpha_1\alpha_2} + C_{\mu_1\mu_2\mu_3} \partial_{\beta_1}X^{\mu_1} \partial_{\beta_2}X^{\mu_2} \partial_{\beta_3}X^{\mu_3}\varepsilon^{\alpha_1\alpha_2\beta_1\beta_2\beta_3}\,.
\end{split}
\end{align}
This ansatz is the unique one that only involves the 3-form and preserves the fact $\zeta$ and $P$ are densities of weight 1. The numerical coefficients and the order $C^2$ terms can be determined by demanding that this preserves the zero-flux brackets \eqref{main:M5branePBsNoflux} whenever the 3-form $C$ is closed. When $C$ is not closed, this transformation does not preserve the Poisson brackets; it introduces a 4-form flux $G=dC$ and 7-form flux $F=C\wedge dC/2$. This is precisely the effect of the transformation $e^\Phi$ on the target-space QP manifold $\mathcal M$ (in the special case where the 6-form potential is set to zero).

\subsection{D3 brane}
The target QP manifold is:
\be
\mathcal M= T^\star[4]T[1] M\times \mathbb R^2[2]
\ee
with local homogeneous (in degree) coordinates
\begin{table}[H]\centering
\begin{tabular}{c|c|c|c|c|c|c|c}
\text{coord} & $x^\mu$ & $\psi^\mu$ & $\zeta_i$ & $\chi_\mu$ &  $p_\mu$\\
$\deg$ & $0$  & $1$ & $2$ & $3$ &  $4$
\end{tabular}
\end{table}
\noindent and nonvanishing Poisson brackets (where $\varepsilon_{ji}=-\varepsilon_{ij}$ and $i,j=1,2$)
\be
\label{main:D3branePBs}
(x^\mu,p_\nu)=-(p_\nu,x^\mu)=\delta^\mu_\nu\,,\quad (\psi^\mu,\chi_\nu)=(\chi_\nu,\psi^\mu)=\delta^\mu_\nu\,,\qquad (\zeta_i,\zeta_j)=\varepsilon_{ij}\,.
\ee
The associated symplectic form therefore has degree $P=4$, whence $p=3$. $P$ is even, so this is quite similar to the previous case. The target space Q-structure $Q$ has the hamiltonian
\be
\label{main:D3hamiltonianwithfluxes}
\Theta=-\psi^\mu p_\mu -\tfrac{1}{2} A_\mu^{ij}\zeta_i\zeta_j \psi^\mu -\tfrac{1}{3!} G^i_{\mu_1\mu_2\cdots \mu_{3}}(x)\psi^{\mu_1}\psi^{\mu_2} \psi^{\mu_{3}}\zeta_i-\tfrac{1}{5!} F_{\mu_1\mu_2\cdots \mu_{5}}(x)\psi^{\mu_1}\psi^{\mu_2}\cdots \psi^{\mu_{5}}
\ee
and $Q$ squares to zero when the background fields satisfy
\be
dA^{ij}+\tfrac{1}{2} A^{ik}\wedge A^{\ell j}\varepsilon_{k\ell}=0\,,\quad d G^i + G^j\wedge A^{ki}\varepsilon_{jk}=0\,, \quad dF +\tfrac{1}{2} G^i\wedge G^j \varepsilon_{ij}=0\,.
\ee
Therefore the 1-forms $A^{ij}=A^{ji}$ define a flat $\mathrm{SL}(2;\mathbb R)$ connection, the 5-form $F$ is closed, and the pair $G^i$ of 3-forms are annihilated by the exterior covariant derivative associated to $A^{ij}$. These background fields are related by a redefinition to the type IIB axiodilaton, the Ramond-Ramond 5-form flux, and to the $\mathrm{SL}(2)$-doublet of 3-form fluxes. See \cite{Arvanitakis:2018cyo} for details.

The difference with respect to the generic $p$-brane situation of section \ref{sec:genericpbrane} is the $\mathrm{sl}(2)$-doublet superfield $\bm{\zeta}_i$ which we expand in component fields similarly to what we did for the M5:
\be
\bm{\zeta}_i(\sigma,d\sigma)=\sum_{n=0}^3 \tfrac{1}{n!}\zeta_i\,_{\alpha_1\cdots\alpha_n}(\sigma) d\sigma^{\alpha_1}\cdots d\sigma^{\alpha_n} \implies \deg\zeta_{\alpha_1\cdots\alpha_n}=2-n\,.
\ee
It is again convenient to work with the dualised component fields
\be
\zeta^{\alpha_1\cdots\alpha_n}_i=\tfrac{1}{(3-n)!}\varepsilon^{\alpha_1\cdots\alpha_n\beta_1\cdots\beta_{3-n}} \zeta_{i\,\beta_1\cdots\beta_{3-n}}\implies \deg \zeta^{\alpha_1\cdots\alpha_n}_i=-1+n\,.
\ee

\paragraph{Poisson brackets for vanishing fluxes.} In this case $\Theta=-\psi^\mu p_\mu$ and the ZL reduction reads
\be
d\bm{x}^\mu=\bm{\psi}^\mu\,,\quad d\bm{\chi}_\mu=-\bm{p}_\mu\,,\quad d\bm{\zeta}_i=0\,.
\ee
The independent superfields are again $\bm{x}^\mu$ and $\bm{\chi}_\mu$, but also $\bm{\zeta}_i$, which is constrained to be closed:
\be
\label{main:D3braneZLreductionZeta}
d\bm{\zeta}_i=0\iff \partial_{[\alpha_1}\zeta_{|i|}\,_{\alpha_2\cdots\alpha_{n+1}]}=0\iff \partial_\alpha\zeta_i^{\alpha\beta_1\cdots \beta_n}=0\,.
\ee
Their non-vanishing Poisson brackets are
\be
\label{main:D3branePBsNoflux}
\begin{split}
\{x^\mu_{\alpha_1\alpha_2\cdots\alpha_n}(\sigma_1)\,, P_\nu^{\beta_1\beta_2\cdots \beta_m}(\sigma_2)\}=n!(-1)^{n(p-n)}\delta_{mn}\delta^\mu_\nu\delta^{\beta_1\beta_2\cdots \beta_m}_{\alpha_1\alpha_2\cdots\alpha_n}\delta^3(\sigma_1-\sigma_2)\,,\\ \{\zeta_i^{\alpha_1\cdots\alpha_n}(\sigma_1)\,,\zeta_j^{\beta_1\cdots\beta_m}(\sigma_2)\}=\varepsilon_{ij}\varepsilon^{\alpha_1\cdots\alpha_n\gamma\beta_1\cdots\beta_m}\frac{\pd}{\pd\sigma_1^\gamma}\delta^3(\sigma_1-\sigma_2) \delta_{m,2-n}\,,
\end{split}
\ee
where the last one in particular is determined from $\{\langle\zeta_i|\epsilon\rangle\,,\langle \zeta_j|\eta\rangle\}=\langle\varepsilon_{ij}|\epsilon d\eta\rangle$. (The other component fields are defined in \eqref{main:momentumdensitydef} and \eqref{main:genericpbraneSuperfieldExpansion} in terms of $\bm{x}^\mu$ and $\bm{\chi}_\mu$.)

\paragraph{D3-brane phase space and manifest $\mathrm{SL}(2)$.} The D3 brane has a single $\mathrm{U}(1)$ gauge field on its worldvolume. The usual phase space for the bosonic sector of the D3 brane therefore involves the canonical Poisson brackets
\be
\{X^\mu(\sigma_1)\,,P_\nu(\sigma_2)\}=\delta^\mu_\nu\delta^3(\sigma_1-\sigma_2)\,,\quad \{A_\alpha(\sigma_1)\,, D^\beta(\sigma_2)\}=\delta^\beta_\alpha \delta^3(\sigma_1-\sigma_2)\,,
\ee
where $A_\alpha$ are the three spatial components of the 4-potential. The temporal component $A_0$ enters as the lagrange multiplier enforcing the Gauss law $\partial_\alpha D^\alpha=0$. Therefore,  upon identifying
\be
\zeta_1^\alpha=\varepsilon^{\alpha\beta_1\beta_2}\partial_{\beta_1}A_{\beta_2}\,,\quad \zeta_2^\alpha=D^\alpha\,,
\ee
we see from \eqref{main:D3branePBsNoflux} that the D3 brane phase space is identical to the degree-zero \emph{non-degenerate} subspace produced by the current algebra construction, where the constraint \eqref{main:D3braneZLreductionZeta} has been enforced. (The harmonic components of $\bm\zeta_i$ span a degenerate subspace with respect to the Poisson bracket, similar to what we saw for the M5 brane.)

Since this non-degenerate subspace for the fields $\bm{\zeta}_i$ consists of exact fields, we could also solve \eqref{main:D3braneZLreductionZeta} to arrive at an $\mathrm{SL}(2)$-doublet of potentials. That formulation of the D3-brane phase space was recently arrived at by Mezincescu and Townsend \cite[section 3.1.1]{Mezincescu:2019vxk} in the context of the tensionless limit, for which the dynamics on any IIB background --- and not just the phase space --- enjoys an $\mathrm{SL}(2)$-invariance as well.

\paragraph{Current algebra (for vanishing fluxes).} Top-form currents \eqref{mycurrentDefforhumans} of degree zero arise from $A\in C^\infty(\cM)$ of degree $p=3$:
\be
-A=v^\mu(x) \chi_\mu +\lambda_\mu^i(x) \psi^\mu \zeta_i + \tfrac{1}{3!}\rho_{\mu\nu\rho}(x) \psi^\mu\psi^\nu\psi^\rho\,.
\ee
This corresponds to a section of the bundle $E=(T\oplus  T^\star\oplus T^\star\oplus \Lambda^3 T^\star) M$ of generalised vectors for $E_5\cong \mathrm{Spin}(5,5)$ exceptional generalised geometry (for $\dim M=4$) or $E_d=\mathrm{SL}(5),\,\mathrm{SL}(3)\times\mathrm{SL}(2),\,\dots$ (for $d=3,2,\dots$), in a type IIB section \cite{Pacheco:2008ps,Hull:2007zu}. As for the M5 brane, other values of $d$ are allowed, but $E_{d+1}$-covariance will break. The expressions $(A',(QA))$ and $(A,A')$ that determine the current algebra were explicitly calculated in \cite{Arvanitakis:2018cyo} and again correspond to the type IIB generalised Lie derivative $L_A A'$ and the symmetric map $\times_N:E\otimes E\to N\equiv \Lambda^2 T^\star M\oplus \mathbb R^2$. Therefore, the universal bracket formula \eqref{currentalgebramainresult} gives (for $\deg\epsilon=\deg\eta=0$)
\be
\{\langle A|\epsilon\rangle\,, \langle A'|\eta\rangle\}=-\langle L_A A'|\epsilon\eta\rangle+\langle A\times_NA'|\eta d\epsilon\rangle\,.
\ee
By restricting to the zero degree $|_{\deg 0}$ parts of all currents, we deduce that the algebra of D3-brane currents built from the worldvolume 3-form variable
\be
\label{d3:zm}
Z_M\equiv \begin{pmatrix} P_\mu \\ \zeta^\alpha_i\partial_\alpha X^\mu \\ \varepsilon^{\alpha_1\alpha_2\alpha_3}\partial_{\alpha_1}X^{\mu_1}\partial_{\alpha_2}X^{\mu_2}\partial_{\alpha_3}X^{\mu_3} \end{pmatrix}
\ee
has a geometric expression in terms of exceptional generalised geometry, as was found by Hatsuda et al.~\cite{Hatsuda:2012uk}.

The $n$-form currents for $0\leq n < p$ correspond to sections of bundles of the type IIB tensor hierarchy; see section \ref{sec:tensorhierarchy}.

\section{Applications}

\subsection{Brane dynamics}
The construction of section \ref{sec:2} produces a Poisson algebra, which, under favourable conditions, contains (in its degree-zero subspace) the phase space appearing in the usual hamiltonian formulation of brane physics. This defines the brane kinematics. Could we say anything about brane \emph{dynamics}?

Surprisingly, the answer is yes, at least when the brane dynamics is reparameterisation-invariant, i.e.~invariant under worldvolume diffeomorphisms. This is the case for the branes of relevance in string/M-theory, including the string. In the hamiltonian formulation, reparameterisation invariance is realised thus: the hamiltonian (density) $H$ is a collection of constraint functions $\mathcal H,\mathcal H_\alpha$:
\be
H= e \mathcal H + u^\alpha \mathcal H_\alpha\,,
\ee
where $\mathcal H$ and $\mathcal H_\alpha$ respectively generate temporal and spatial diffeomorphisms  via the Poisson bracket. The corresponding lagrange multipliers $e$ and $u^\alpha$ are a scalar density and a spatial vector field. ($\alpha=1,\dots, p$ is again a worldspace index.) When the constraint algebra is closed --- first-class --- under Poisson brackets, worldvolume diffeomorphisms are realised as gauge symmetries. This algebra is rather similar to that of the hamiltonian and momentum constraints in the ADM formulation of general relativity; in particular, the spatial constraints ($\mathcal H_\alpha$) form a closed subalgebra corresponding to infinitesimal spatial diffeomorphisms. (There might also be further constraints when worldvolume gauge fields are present, depending on our choice of phase space variables; they do not arise here.)

If we fix a phase space along with $\mathcal H$ and $\mathcal H_\alpha$, we have completely specified the reparame\-terisation-invariant brane dynamics. The brane phase space construction actually provides a \emph{model-independent definition} for the spatial constraints $\mathcal H_\alpha$, at least for the standard choice of real source Q-manifold
$$
\cN=T[1]\Sigma
$$
where $\Sigma$ is a spatial section of the worldvolume. The only remaining choice will be the hamiltonian constraint $\mathcal H$.

\subsubsection{The spatial diffeomorphism constraints}
We saw above that the de Rham differential $d$ is interpreted as a vector field on $\cN=T[1]\Sigma$. In fact the same is true for the interior product $\iota_v$ and Lie derivative $\mathcal L_v$ associated to any vector field $v^\alpha(\sigma)$ on $\Sigma$. We therefore have a natural candidate for the corresponding spatial diffeomorphism generator: the hamiltonian associated to the vector field $\mathcal L_v$ (if it exists). We will see now that $\mathcal L_v$ --- appropriately understood --- is a hamiltonian vector field for both the original $[\bullet,\bullet]$ and ZL-reduced bracket $\{\bullet,\bullet\}$.

Consider first the general setting of zero locus reduction of a Poisson algebra $(\mathcal P,[\bullet,\bullet])$ with respect to a Q-structure $Z$ as described in Proposition \ref{zlredProp}. Then
\begin{lemma}
If 
\be
X\equiv(-1)^X[Z(\Phi_X),\bullet]
\ee
for $\Phi_X\in\mathcal P$ of definite degree, then $X$ descends to a hamiltonian derivation $\tilde X$ on the zero locus:
\be
\tilde X(\tilde F)\equiv\widetilde{X(F)}= \{\tilde\Phi_X,\tilde F \}\,,
\ee
where the tildes denote the zero locus quotient of Proposition \ref{zlredProp}.
\end{lemma}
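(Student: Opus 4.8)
The plan is to verify two claims in turn: (i) the operator $X$ preserves the ideal $\mathcal I_Z$ generated by the $Z$-exact elements, so that the formula $\tilde X(\tilde F)\equiv\widetilde{X(F)}$ is well defined on $\mathcal P/\mathcal I_Z$; and (ii) the induced operator coincides with the hamiltonian derivation $\{\tilde\Phi_X,\bullet\}$ of the ZL-reduced bracket of Proposition \ref{zlredProp}. Note first that $X=(-1)^X[Z(\Phi_X),\bullet]$ is by construction the hamiltonian vector field with hamiltonian $Z(\Phi_X)$: it is therefore automatically a graded derivation of the commutative product (Leibniz rule for $[\bullet,\bullet]$) as well as of $[\bullet,\bullet]$ itself (Jacobi identity), and its degree is $\deg X=\deg\Phi_X+1-n$.

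For (i) I would use that $\mathcal I_Z$ is generated \emph{as an ideal} by the elements $Z(H)$, $H\in\mathcal P$; since $X$ is a product-derivation it then suffices to show that $X$ maps each such generator into $\mathcal I_Z$. Applying the compatibility of $Z$ with $[\bullet,\bullet]$ and then $Z^2=0$,
\be
X\big(Z(H)\big)=(-1)^X\big[Z(\Phi_X),Z(H)\big]=(-1)^X\Big(Z\big[\Phi_X,Z(H)\big]-(-1)^{\Phi_X+n}\big[\Phi_X,Z^2(H)\big]\Big)=(-1)^X\,Z\big[\Phi_X,Z(H)\big]\,,
\ee
which is $Z$-exact. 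Hence $X(\mathcal I_Z)\subseteq\mathcal I_Z$, so $X$ descends to $\tilde X$, and $\widetilde{X(F)}$ depends only on the class $\tilde F$ (if $F-F'\in\mathcal I_Z$ then $X(F)-X(F')=X(F-F')\in\mathcal I_Z$).

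For (ii) I would rewrite the same compatibility identity as $[Z(\Phi_X),F]=Z[\Phi_X,F]-(-1)^{\Phi_X+n}[\Phi_X,Z(F)]$ and multiply through by $(-1)^X=(-1)^{\Phi_X+1+n}$, obtaining
\be
X(F)=(-1)^{\Phi_X+1+n}\,Z[\Phi_X,F]+[\Phi_X,Z(F)]\,.
\ee
The first term is $Z$-exact, hence lies in $\mathcal I_Z$ and is killed by the quotient, so $\widetilde{X(F)}=\widetilde{[\Phi_X,Z(F)]}$, which is exactly $\{\tilde\Phi_X,\tilde F\}$ by the definition of the ZL bracket in Proposition \ref{zlredProp}. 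This identifies $\tilde X$ with the inner derivation $\{\tilde\Phi_X,\bullet\}$, which is automatically a derivation of both the reduced bracket and the commutative product, completing the argument.

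The argument is entirely formal, so there is no genuine obstacle; the one point that demands care is the sign bookkeeping in step (ii). The prefactor $(-1)^X$ built into the definition of $X$ is precisely what makes the surviving term appear with coefficient $+1$: the two factors of $(-1)^{\Phi_X}$ combine to $(-1)^{2\Phi_X}=1$ and $(-1)^{2n+1}=-1$ cancels the explicit minus sign, so no spurious sign contaminates the identification $\tilde X=\{\tilde\Phi_X,\bullet\}$. The only structural input beyond the derivation properties of $X$ is $Z^2=0$, used in step (i) to ensure that $X$ respects the ideal.
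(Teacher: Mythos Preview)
Your proof is correct. The paper actually states this lemma without proof and proceeds directly to applying it; your argument supplies the missing verification, using precisely the two structural ingredients that underlie the ZL reduction of Proposition~\ref{zlredProp}---compatibility of $Z$ with $[\bullet,\bullet]$ and $Z^2=0$---together with the sign identity $(-1)^X=(-1)^{\Phi_X+1+n}$ that the paper's normalisation of $X$ is designed to produce.
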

We set
\be
X=\bar{\mathcal L}_v
\ee
where the bar again means we lift the Lie derivative $\mathcal L_v$ to the space of maps $\maps(T[1]\Sigma \to \mathcal M)$. Since $Z=\bar d-\bar Q$, we can use Cartan's magic formula to find
\be
X=\overline{[\iota_v,d]}=-[\bar{\iota}_v, \bar d] =- (\bar{\iota}_v Z +Z \bar{\iota}_v)
\ee
where we exploited the fact $\bar Q$ (the lift of the target-space $\cM$ vector $Q$ to $\maps$) (anti)commutes with all lifts of source-space $T[1]\Sigma$ vector fields. (See the appendix for the minus sign.) If $-\bar{\iota}_v$ is hamiltonian for $[\bullet,\bullet]$ with hamiltonian function $\Phi_v$, it follows that $\bar{\mathcal L}_v=[Z\Phi_v,\bullet]$, so the lemma applies.

The vector field $\bar{\iota}_v$ is indeed hamiltonian because of three familiar facts:
\begin{enumerate}
\item the measure $\int_{T[1]\Sigma}$ is invariant: $\int_{T[1]\Sigma}\iota_v\epsilon=0$ (because the integral picks out the top-form component);
\item the Poisson structures $(\bullet,\bullet)$ and $[\bullet,\bullet]$ on the target space $\cM$ and mapping space $\maps(T[1]\Sigma\to \cM)$ are both nondegenerate, and the symplectic forms are exact;
\item any vector field that annihilates the symplectic potential is hamiltonian.
\end{enumerate}
In our conventions, if $\Omega=\delta\Xi$ is the (odd) symplectic form associated to $[\bullet,\bullet]$ and $\mathcal L_X \Xi=0$, its associated hamiltonian is the contraction $\iota_X\Xi$. For the target-space symplectic form $\omega$, we select the canonical potential $\xi=P^{-1}\iota_E\omega$ involving the degree-counting ``Euler'' vector field $E$, which induces via transgression the following 1-form $\Xi$ on $\maps$:
\be
\Xi=\int_{T[1]\Sigma}\frac{1}{P}(\deg z^a)\bm{z}^a\omega_{ab}(\bm{z})  \delta \bm{z}^b\,.
\ee
Since $\deg \iota_{\bar{\iota}_v}=-2$, we find
\be
\Phi_v\equiv \iota_{-\bar{\iota}_v}\Xi= -\int_{T[1]\Sigma}\frac{1}{P}(\deg z^a)\bm{z}^a\omega_{ab}(\bm{z})  \iota_v\bm{z}^b\,.
\ee
We refer to \cite[appendix A, B]{Arvanitakis:2018cyo} and to the appendix here for the conventions that lead to these formulas. It is clear from the above explicit expression that $\mathcal L_{\bar{\iota}_v}\Xi=0$.

By construction, the image $\tilde \Phi_v$ of $\Phi_v$ under the zero-locus quotient is the hamiltonian that generates an infinitesimal spatial diffeomorphism (associated to the vector field $v$ on $\Sigma$). In general, the ZL reduction will introduce a dependence on the background fluxes inside $\tilde \Phi_v$ whenever the fluxes correspond to nontrivial classes (that contribute to the brane symplectic form). This dependence can be removed by a local field redefinition to match the more familiar fluxless generators, as in the discussion around \eqref{fieldredefinition}.

\paragraph{Check: the M5 brane.}
The canonical potential\footnote{Note in \cite{Arvanitakis:2018cyo} we used a different convention for the canonical potential: the one there satisfies $-d\vartheta=\omega$, whereas here $d\xi=\omega$.} associated to the target-space $\cM=T^\star[6]T[1]M\times\mathbb R[3]$ symplectic form $\omega$ whose Poisson structure is defined by \eqref{main:M5PBs} is
\be
\xi=p_\mu dx^\mu -\tfrac{1}{6}( 5\chi_\mu d\psi^\mu + \psi^\mu d\chi_\mu)-\tfrac{1}{2}\zeta d\zeta\,.
\ee
Therefore
\be
\Phi_v=-\int_{T[1]\Sigma}\bm{p}_\mu \iota_v \bm{x}^\mu-\tfrac{5}{6}\bm{\chi}_\mu \iota_v \bm{\psi}^\mu -\tfrac{1}{6}\bm{\psi}_\mu   \iota_v\bm{\chi}_\mu -\tfrac{1}{2} \bm{\zeta} \iota_v\bm{\zeta} \,.
\ee
For the purposes of comparison we switch off the possible 4-form and 7-form fluxes in \eqref{main:M5hamiltonianwithfluxes}. Therefore the ZL reduction is given by \eqref{M5:ZLreductionfluxless}. Using that to eliminate $\bm{p}_\mu$ and $\bm{\psi}^\mu$, we calculate (using integration by parts, where we dropped the tilde notation on the right-hand side)
\be
\tilde{\Phi}_v=\int_{T[1]\Sigma} \bm{\chi}_\mu (\iota_v d+d\iota_v)\bm{x}^\mu -\tfrac{1}{2}\bm{\zeta}\iota_v\bm{\zeta}=\int_{T[1]\Sigma} \bm{\chi}_\mu \mathcal L_v\bm{x}^\mu -\tfrac{1}{2}\bm{\zeta}\iota_v\bm{\zeta}\,.
\ee
The degree-zero part $\tilde{\Phi}_v|_{\deg 0}$ is (NB \eqref{main:momentumdensitydef} and \eqref{main:M5branezetadualisedcptfields})
\be
\tilde{\Phi}_v|_{\deg 0}=\int_\Sigma d^5\sigma\:-v^\alpha ( P_\mu \partial_\alpha X^\mu+ \tfrac{1}{2} \zeta^{\beta_1\beta_2} \zeta_{\alpha\beta_1\beta_2})\,.
\ee
We read off the usual M5-brane spatial diffeomorphism generator:
\be
\mathcal H_\alpha=P_\mu \partial_\alpha X^\mu+ \tfrac{1}{2} \zeta^{\beta_1\beta_2} \zeta_{\alpha\beta_1\beta_2}\,.
\ee
By comparing with the explicit form of the Poisson brackets \eqref{main:M5branePBsNoflux}, we see the bracket $\{\tilde \Phi_v|_{\deg 0},\bullet\}$ indeed generates the infinitesimal diffeomorphism parameterised by the vector field $v^\alpha$ on the worldspace $\Sigma$. (For this check it is necessary to use the fact that $\zeta$ is closed: $\partial_\alpha\zeta^{\alpha\beta}=0$.)

\subsubsection{The hamiltonian constraint}
The hamiltonian constraint $\mathcal H$ actually generates diffeomorphisms in worldvolume time \emph{up to trivial gauge transformations}, meaning up to gauge transformations that vanish when the equations of motion are used. It is probably for that reason that the method that worked for $\mathcal H_\alpha$ does not seem to apply to produce a formula for $\mathcal H$.

Nevertheless, the hamiltonian constraint \emph{does} apparently have a universal expression\footnote{See \cite{Hatsuda:2013dya,Hatsuda:2012vm} for the M2 and M5, and \cite{Jurco:2012gc} for a more exotic example. For the M2, this was originally pointed out by Berman and Perry \cite{Berman:2010is}.}. If $\mathcal M^{MN}$ is the inverse generalised metric, then
\be
\label{hamiltonianconstraint}
\mathcal H=\mathcal M^{MN} Z_M Z_N
\ee
where $Z_M$ is the local basis of $p$-form currents (that correspond to a local basis of sections of the generalised tangent bundle). For the string these are the familiar expressions $Z_M=(P_\mu,\partial_\sigma X^\mu)$ given in the Introduction, while for the M5 and D3 branes these were written down in \eqref{m5:zM} and \eqref{d3:zm}.

The notion of a generalised metric has not yet been formulated in a way that usefully interacts with the QP structure, except for the familiar case of degree $P=2$ (i.e.~Courant) algebroids: there, a generalised metric is identified with a symplectomorphism $\mathcal M$ that fixes the body of the QP manifold and squares to $+1$ \cite{grabowski2006courant}. Even in this case it is not yet clear how to reverse-engineer a prescription that will produce \eqref{hamiltonianconstraint}.

There is however a considerable simplification relevant to our construction. Since the fluxes are present in the Q-structure and therefore inside the twisted generalised Lie derivative, we are in the ``untwisted'' picture (in the terminology of \cite{Pacheco:2008ps,Ashmore:2015joa}). The potentials cannot also appear inside the generalised metric, because we would be counting them twice. Therefore, when we account for the fluxes by inserting them in the Q-structure (as in formula \eqref{main:M5hamiltonianwithfluxes} for the M5 brane), we must use the fluxless expression for the generalised metric, which is rather simple. For the M5 brane for example (where combinatorial factors were ignored),
\be
\label{m5generalisedmetricuntwisted}
\mathcal M^{MN}=\begin{pmatrix} g^{\mu\nu} & & \\ & g_{\mu[\nu}g_{\rho]\sigma} & \\ & & g_{\mu_1[\nu_1} g_{|\mu_2|\nu_2} \cdots g_{\nu_5]\mu_5}
\end{pmatrix}\,,
\ee
where $g_{\mu\nu}$ is the spacetime metric.

\subsection{Dirac structures and the 't Hooft anomaly}
If we have a sigma model with target $M$, isometries of $M$ usually correspond to global symmetries of the sigma model. We consider the problem of gauging those symmetries to obtain a sigma model whose target is the space of orbits. This procedure is not always consistent; when it is fails, we have a 't Hooft anomaly. It is well-known that such anomalies occur in the presence of Wess-Zumino terms even in the classical theory, both when those terms describe electric flux couplings \cite{Hull:1990ms}, and for Wess-Zumino terms related to cocycles for the supertranslation group \cite{DeAzcarraga:1991tm}.

A nice feature of the universal bracket formula \eqref{currentalgebramainresult} is that it translates the obstruction on the sigma model to a geometric condition on the target QP manifold, which can be understood in generalised or exceptional generalised geometry (for those branes relevant in string/M-theory).

We illustrate using the M5 brane. We will need some previously-derived expressions, starting from the current arising from a degree $p=5$ function $A\in C^\infty(\cM)$:
\be
\begin{split}
\label{applications:m5current}
A=-v^\mu(x)\chi_\mu + \omega \zeta +\sigma\,,\qquad\langle A|\epsilon\rangle|_{\deg 0}=\\
\int_\Sigma d^5\sigma \; \epsilon(\sigma)\left(v^\mu(X)P_\mu + \tfrac{1}{2} \omega_{\mu\nu}(X) \partial_{\alpha_1}X^\mu\partial_{\alpha_2}X^\nu \zeta^{\alpha_1\alpha_2} + \tfrac{1}{5!} \varepsilon^{\alpha_1\dots\alpha_5} \sigma_{\mu_1\dots\mu_5}(X)\partial_{\alpha_1}X^{\mu_1}\cdots\right)
\end{split}
\ee
corresponding to the generalised vector $A=v+\omega+\sigma \in\Gamma[E]$, and its colleague $\langle A'|\eta\rangle|_{\deg 0}$. If we were to gauge both currents, the necessary and sufficient condition for (classical) consistency under time evolution is that their Poisson bracket is zero modulo the constraints. We will treat the case where the M5 brane is coupled to 4-form and 7-form fluxes $G$ and $F$; this corresponds to the Q-structure given in \eqref{main:M5hamiltonianwithfluxes}, which leads to the expressions
\be
\label{applications:twisteddorfmanm5}
\begin{aligned}
(A',QA)=\mathcal{L}_{v}v^{\prime}+ &(\mathcal{L}_{v}\omega^{\prime}-\imath_{v^{\prime}}d\omega) \zeta & +&(\mathcal{L}_{v}\sigma^{\prime}-\imath_{v^{\prime}}d\sigma-\omega^{\prime}\wedge d\omega )& \\ & + (\iota_{v'}\iota_v G)\zeta& + &\iota_{v'}\iota_{v} F -\iota_{v'}(G \wedge\omega) + (\iota_v G)\wedge\omega'&\,,
\end{aligned}
\ee
which is the twisted generalised Lie derivative $L_A A'$, and
\be
\label{applications:symmetricbulletmapm5}
(A,A')= -(\iota_{v'}\omega + \iota_v\omega')\zeta +(-\iota_{v'}\sigma - \iota_v\sigma'+\omega\wedge \omega')\,,
\ee
which is the symmetric map $\times_N: E\otimes E\to N$ familiar from exceptional generalised geometry. The universal bracket formula then specifies the Poisson bracket of currents:
\be
\{\langle A|\epsilon\rangle\,, \langle A'|\eta\rangle\}=-\langle (A',QA)|\epsilon\eta\rangle+\langle(A,A')|\eta d\epsilon\rangle\,.
\ee

Consider now a collection of vector fields $v_i\in \Gamma[TM]$ that forms a Lie algebra,
\be
\label{killingalgebra}
[v_i,v_j]\equiv \mathcal L_{v_i}v_j= f_{ij}{}{}^k v_k
\ee
with structure constants $f_{ij}{}{}^k$. These correspond to the degree 5 functions $v_i\equiv v_i^\mu\chi_\mu$ on the target QP manifold $\cM=T^\star[6]T[1] M\times \mathbb R[3]$. The associated currents \eqref{applications:m5current} satisfy
\be
\{ \langle v_i|\epsilon\rangle\,, \langle v_j|\eta\rangle\}= -f_{ij}{}{}^k \langle v_k|\epsilon\eta\rangle + \langle (\iota_{v_i}\iota_{v_j} G) \zeta +   \iota_{v_i}\iota_{v_j} F |\epsilon\eta\rangle\,.
\ee
Unless the fluxes are such that $\iota_{v_i}\iota_{v_j} G=\iota_{v_i}\iota_{v_j} F=0$, there is an obstruction to gauging these currents. This is true independently of whether the vectors $\{v_i\}$ are Killing. (If they are not, we will find another obstruction when we calculate the Poisson bracket involving the hamiltonian constraint $\mathcal H$ of \eqref{hamiltonianconstraint} and \eqref{m5generalisedmetricuntwisted}.) 

One can attempt to modify the currents $\langle v_i|$ by introducing 2-form and 5-form components $\omega_i,\sigma_i$ as in expression \eqref{applications:m5current}: the infinitesimal action on the brane embedding $X^\mu(\sigma)$ will be the same, but the modified currents will act differently on the momenta $P_\mu(\sigma)$ and the M5 brane gauge field variable $\zeta^{\alpha\beta}(\sigma)$, ideally in a way that removes the obstruction. At this point we are investigating the action of a generalised vector on the M5 phase space, so it is more convenient to adopt a generalised geometry language to succinctly characterise when the obstruction vanishes:
\begin{prop}
Let $L$ be a \emph{Dirac structure} for $E=T\oplus \Lambda^2T^\star\oplus \Lambda^5T^\star$ generalised geometry: a subbundle of the generalised tangent bundle $E$, that is closed with respect to the (twisted) generalised Lie derivative $L_A A'$ \eqref{applications:twisteddorfmanm5}, and isotropic with respect to the bilinear form $\times_N$ \eqref{applications:symmetricbulletmapm5}. Any local basis of sections of $L$ gives rise to anomaly-free M5-brane currents of the form \eqref{applications:m5current}.
\end{prop}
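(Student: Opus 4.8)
The plan is to read off the meaning of ``anomaly-free'' from the operational criterion established just above: a family of currents can be consistently gauged precisely when, on promoting them to constraints, the resulting system is first class --- every mutual Poisson bracket lies in the ideal generated by the constraints (which now includes the currents themselves, alongside the reparameterisation constraints $\mathcal H_\alpha$). The decisive structural fact is that the universal bracket formula \eqref{currentalgebramainresult}, specialised to the M5 as in \eqref{applications:twisteddorfmanm5}--\eqref{applications:symmetricbulletmapm5}, presents $\{\langle A|\epsilon\rangle,\langle A'|\eta\rangle\}$ as exactly two pieces: a ``structure'' term $-\langle L_A A'|\epsilon\eta\rangle$ built from the twisted generalised Lie derivative, and a ``Schwinger'' term $\langle A\times_N A'|\eta d\epsilon\rangle$ built from the symmetric pairing $\times_N$. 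The two defining properties of a Dirac structure $L\subset E$ --- closure under $L_A A'$ and isotropy for $\times_N$ --- are tailored exactly to neutralise these two pieces, so the proof will amount to a short specialisation of the universal formula.

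Concretely, I would fix a local frame $\{A_i\}$ for $\Gamma[L]$ and consider the currents $\langle A_i|\epsilon\rangle$ of the form \eqref{applications:m5current}. First step: since $L$ is closed under the twisted generalised Lie derivative, $L_{A_i}A_j$ is again a section of $L$, hence $L_{A_i}A_j=c_{ij}{}^k A_k$ for structure functions $c_{ij}{}^k\in C^\infty(M)$ (not constants, as $L$ is a $C^\infty(M)$-module --- harmless, the currents simply realise the Lie algebroid bracket of $L$, i.e.\ an ``open'' gauge algebra). Because the current density attached to $c_{ij}{}^k A_k$ is just $c_{ij}{}^k(X(\sigma))$ times that of $A_k$, the structure term $-\langle c_{ij}{}^k A_k|\epsilon\eta\rangle$ is (phase-space function)$\times$(constraint), hence weakly zero. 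Second step: isotropy of $L$ means $A_i\times_N A_j=0$ identically as a section of $N$, so the Schwinger term $\langle A_i\times_N A_j|\eta d\epsilon\rangle=\langle 0|\eta d\epsilon\rangle$ vanishes on the nose, not merely weakly; equivalently, the symmetric part of $L_A A'$ is controlled by $A\times_N A'$, so on an isotropic $L$ the bracket is antisymmetric and $\Gamma[L]$ carries a genuine Lie algebroid bracket with no central/Schwinger extension. Combining, $\{\langle A_i|\epsilon\rangle,\langle A_j|\eta\rangle\}$ lies in the ideal generated by the currents, so the gauged system is first class. Anomaly-freeness is basis-independent, since closure and isotropy are properties of the subbundle $L$ and not of the chosen frame; and the brackets with the spatial reparameterisation generators $\mathcal H_\alpha$ are automatically weakly zero because the currents are built covariantly from tensors on $M$ and densities on $\Sigma$, so $\mathcal H_\alpha$ imposes nothing extra.

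I do not expect a serious obstacle here; the one genuinely delicate point is bookkeeping --- one must verify that ``closed under $L_A A'$'' refers to the \emph{twisted} bracket \eqref{applications:twisteddorfmanm5} actually produced by $(A',QA)$ once the 4- and 7-form fluxes sit inside the Q-structure \eqref{main:M5hamiltonianwithfluxes}, and that ``isotropic'' refers to $\times_N$ of \eqref{applications:symmetricbulletmapm5}, rather than to the fluxless versions. With that matching in place no computation beyond \eqref{applications:twisteddorfmanm5}--\eqref{applications:symmetricbulletmapm5} is needed. I would also flag explicitly that this Proposition does not require the underlying vector fields to be Killing: that condition concerns instead the Poisson bracket of the currents with the hamiltonian constraint $\mathcal H$ of \eqref{hamiltonianconstraint} and \eqref{m5generalisedmetricuntwisted}, and is a separate (additional) ingredient for a full gauging, outside the scope of the statement proved here.
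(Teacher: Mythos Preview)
Your proposal is correct and follows exactly the route the paper takes: the proposition is stated as an immediate corollary of the preceding derivation, where the universal bracket formula specialises to $\{\langle A|\epsilon\rangle,\langle A'|\eta\rangle\}=-\langle L_A A'|\epsilon\eta\rangle+\langle A\times_N A'|\eta d\epsilon\rangle$, so that closure under $L_A A'$ handles the structure term and isotropy under $\times_N$ kills the Schwinger term. The paper gives no separate proof environment; your write-up simply makes explicit what is left implicit there, including the remark that the Killing condition is a separate issue pertaining to $\mathcal H$.
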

\noindent This particular structure is defined by Tennyson and Waldram \cite{upcomingDave} following an analogous definition for $E_{7(7)}$ generalised geometry \cite{Ashmore:2019qii}; the Neumann subbundles of the \emph{D-brane structures} of Blair \cite{Blair:2019tww} are also examples. 

The most general statement --- valid for any $p$-brane --- is best phrased in the QP language. For any QP target $(\cM,Q,\omega)$ with degree $P$ symplectic form $\omega$, the bracket
\be
f,g\to(f,Qg)\,,
\ee
that appears in the bracket formula \eqref{currentalgebramainresult}, relates functions of different degrees \emph{unless}
\be
\deg f=\deg g=P-1\,.
\ee
Therefore this must be the subspace where we can accommodate a closed algebra of Killing vectors as in \eqref{killingalgebra}, independently of the details of $\cM$. For the typical source manifold $\cN=T[1]\Sigma$ relevant for $p$-branes we have $P-1=\dim \Sigma=p$. This motivates the following general definition:
\begin{definition}
\label{diracdef}
If $(\cM,\omega, Q)$ is a QP manifold with $\deg \omega=P$, then a \emph{Dirac structure} $\mathcal L$ of $(\cM,\omega, Q)$ is a maximal subspace of $C_{P-1}^\infty(\cM)$ (functions of degree $P-1$) that is \emph{involutive}:
\be
f,g\in\mathcal L \implies (f,Qg)\in\mathcal L\,,
\ee
and \emph{isotropic}:
\be
f,g\in\mathcal L\implies (f,g)=0\,.
\ee
(The second condition plus the Jacobi identity implies also $(g,Qf)\in\mathcal L$.)
\end{definition}
\noindent Every Dirac structure gives rise to a maximal collection of anomaly-free currents via \eqref{mycurrentDefforhumans}.

For $P=2,\cM=T^\star[2]T[1] M$, we recover the original notion of Dirac structure  \cite{courant1990dirac,liu1995manin} in generalised geometry, that was related to anomaly-free string currents by Alekseev and Strobl \cite{Alekseev:2004np}. For the cotangent case $\mathcal M=T^\star[P]\mathcal M'$, this agrees with the definition of Ikeda and Koizumi \cite[Theorem 7.1]{Ikeda:2011ax} given in the same context. Definition \ref{diracdef} is related to \v{S}evera's $\Lambda$-structures \cite{severa2001some}, which are lagrangian submanifolds where $Q$ descends (``dg-submanifolds''): if a $\Lambda$-structure defined by a lagrangian submanifold $\mathcal Y$ is such that $\mathcal Y_0=\cM_0$ --- i.e.~$\cM$ and the submanifold $\mathcal Y$ are identical in degree zero  --- then the $\Lambda$-structure provides a Dirac structure in the above sense.

\subsubsection{SUSY backgrounds and exceptional (generalised) complex structures}
Dirac structures for the particular case of M5 branes --- i.e. the generalised geometry of the $$E=T\oplus \Lambda^2T^\star\oplus \Lambda^5T^\star$$ bundle --- arise naturally in $N=1$ supersymmetric M-theory flux compactifications to $D=5$ Minkowski space. These structures are the output of a general programme of understanding the Killing spinor conditions as the existence of tensors on $E$ that are stabilised by a group $G$; in other words, in terms of $G$-structures. (See \cite{Ashmore:2015joa,Coimbra:2016ydd,Coimbra:2014uxa,upcomingDave} for 5-dimensional Minkowski compactifications.) Most relevant for us are the \emph{exceptional complex structures} of \cite{upcomingDave}, which encode part of the supersymmetry conditions.

We reproduce the relevant definitions from \cite{upcomingDave}:
\begin{definition} An integrable $\mathbb R^+\times \mathrm U^*(6)$ or \emph{exceptional complex structure} is defined by subbundles $L_1, L_0$ of the complexified generalised tangent bundle $E_\mathbb{C}$ that satisfy
\begin{itemize}
  \item[i)] $\dim_\mathbb{C} L_1=6$,
  \item[ii)] $L_1\times_N L_1=0$,
  \item[iii)] $L_1\cap \bar L_1=\{0\}$ and $L_1\cap L_0=\{0\}$,
  \item[iv)](omitted here)
\end{itemize}
along with the \emph{integrability condition:} $L_V U$ is a section of $L_1$ if $V,U$ are sections of $L_1$.
\end{definition}
These imply in particular that
\be
E_\mathbb{C}= L_1\oplus \bar L_1\oplus L_0
\ee
is a splitting of the complexified generalised tangent bundle $E$ into a direct sum of Dirac structures --- $L_1$ and its complex conjugate $\bar L_1$ --- plus $L_0$. This is precisely analogous to the result of \cite{Alekseev:2004np} for strings and generalised complex structures \cite{gualtieri2004generalized}, except in the string case there is no complementary bundle $L_0$.

The upshot is that the supersymmetry conditions for the above compactification manifold endow the M5-brane worldvolume with a pair of anomaly-free current subalgebras (that do not mutually commute). We expect a similar relation between supersymmetry and anomaly-free currents to be valid fairly generally, because it is also true that supersymmetric compactifications to e.g.~$D=4$ Minkowski space give rise to Dirac structures for $E_{7(7)}$-generalised geometry for \emph{both} M-theory and type IIB \cite{Ashmore:2019qii}.

\subsection{\lf-algebras, brane currents, and the tensor hierarchy}
\label{sec:tensorhierarchy}





\paragraph{The brane current hierarchy.} Any QP manifold $(\cM,\omega,Q)$ comes with an infinite-dimensional \lf-algebra\footnote{We reviewed \lf-algebras recently in \cite{Arvanitakis:2020rrk}. The degree convention of \cite[Proposition 3]{Arvanitakis:2018cyo} (where the \lf-algebra associated to the QP manifold is given) matches the convention of \cite[section 2]{Arvanitakis:2020rrk}.} that lives on the vector space of functions on $\cM$ of degrees up to $\deg \omega=P$ \cite[Theorem 4.4]{Ritter:2015ffa}. For the QP manifolds of Table \ref{tablebranes}, this \lf-algebra was later identified as the \emph{tensor hierarchy} of exceptional generalised geometry/exceptional field theory\footnote{See \cite{baraglia2012leibniz,Cederwall:2018aab,Cagnacci:2018buk} for non-QP derivations of this \lf-algebra; Cederwall and Palmkvist \cite{Cederwall:2018aab} and Cagnacci, Codina, and Marques \cite{Cagnacci:2018buk} in particular both deal with the tensor hierarchy in the ExFT context --- including extended coordinates --- which is strictly bigger than the tensor hierarchy obtained in the QP construction (which has no extended coordinates).} \cite{Arvanitakis:2018cyo}. The correspondence is given in Table \ref{tabletensorhierarchy}.

\begin{table}[h]\centering
\begin{tabular}{c|cc c c c}
$E_{d(d)}$-module& $R_1$ & $R_2$ & \dots  &$ R_{P-1}$ & $R_{P}$\\
functions on $\mathcal M$ of degree & $P-1$ & $P-2$ &\dots & $1$ & $0$\\
$p$-brane currents by differential form rank  & $p$ & $p-1$ &\dots & $1$ & $0$
\end{tabular}
\caption{Tensor hierarchy modules, their corresponding functions on the QP-manifold $\cM$, and the associated brane currents (for real source Q-manifold $\cN=T[1]\Sigma$). $P=p+1$ is the degree of the symplectic structure $\omega$ on $\cM$; $p=\dim \Sigma$. For certain $n,p,d$, the $R_{p+1-n}$ module is strictly larger than the $n$-form current multiplet, as described in text.}
\label{tabletensorhierarchy}
\end{table}

We saw earlier that the brane phase space construction gives a distinguished class of currents \eqref{mycurrentDefforhumans}. The ones that do not vanish when we restrict to $\deg 0$ components of all superfields are precisely those that arise from functions $f\in C^\infty(\cM)$ of degree $<P$. Therefore, \emph{the tensor hierarchy gives rise to a brane current hierarchy} (Table \ref{tabletensorhierarchy}). We see that going higher in the tensor hierarchy means going lower in the brane current hierarchy, when the brane currents are ordered by differential-form rank.

\paragraph{``Missing modules''.} Some of the $E_{d(d)}$-modules of  Table \ref{tabletensorhierarchy} are only captured partially by functions on $\cM$, depending on the rank of the duality group $E_{d(d)}$ and the type of construction (M-theory versus type IIB). The systematic patterns of these ``missing modules'' are listed in \cite{Arvanitakis:2018cyo}. (More exposition and references on the tensor hierarchy are also given there.)

The worldvolume picture of the tensor hierarchy provides an intuitive explanation for the missing modules. We illustrate with the M2 brane, which is the case $P=3$ of the construction of section \ref{sec:genericpbrane}. The top-form currents in this case are the 2-form currents \eqref{main:genericPbranedeg0current}; if $d\equiv\dim M<5$ these fill out the $R_1$ module of the $E_{d(d)}$ tensor hierarchy (i.e.~the module of generalised vectors). The next module in the tensor hierarchy is $R_2$. On the M2 brane the 1-form currents $\langle \lambda_\mu(x)\psi^\mu|\epsilon\rangle$ correspond to spacetime 1-forms $\lambda_\mu$:
\be
\int d^2\sigma \: \lambda_\mu(X(\sigma)) \varepsilon^{\alpha\beta}\partial_\alpha X^\mu   \epsilon_\beta(\sigma)\,.
\ee
The 1-form currents on the M2 brane take this form regardless of the dimension of spacetime. Therefore, after a certain large value of $d$ --- namely $4$ --- the $R_2$ module of $E_{d(d)}$ will always be larger than the $\mathrm{GL}(d)$ module of 1-forms. For $d=4$ the exceptional group is $E_{4(4)}=\mathrm{SL}(5)$ and the $R_2$ module corresponds to the $\bar{\bf 5}$ representation, which is larger than the module of 1-forms of the 4-dimensional spacetime.

Therefore the explanation for the patterns of missing modules on the QP-manifold side observed in \cite{Arvanitakis:2018cyo} is that the corresponding branes do not have enough degrees of freedom to accommodate full $E_{d(d)}$ current multiplets. This is physically reasonable since U-duality sends e.g.~M2 branes to M5 branes and permutes strings and D-branes; unless we introduce extended coordinates to accommodate brane winding modes, we cannot expect duality rotations to be (fully) realised on the worldvolume, except possibly for low rank duality groups as we just saw for the M2 brane.

\paragraph{\lf-algebras on the brane?}
The brackets of the \lf-algebra associated to the QP manifold $(\cM,\omega,Q)$ take the schematic form
\be
\label{linftytensorhierarchyschematicbracket}
[f_1,f_2,\dots f_n]\sim \bigg(\cdots\Big((Qf_1,f_2),f_3\Big)\cdots  f_n\bigg)\,,\qquad f_1,f_2\dots, f_n \in C^\infty(\cM)\,,
\ee
which is to be (anti)symmetrised in $1,2,\dots n$ (see \cite[Proposition 3]{Arvanitakis:2018cyo}). The natural question is: given the correspondence between elements of this \lf-algebra and brane currents \eqref{mycurrentDefforhumans} $$f_i\to \langle f_i|\,,$$ \emph{(how) is the \lf-algebraic structure realised on the brane phase space?} 

We do not yet have an answer, but we do point out that the explicit form of \eqref{linftytensorhierarchyschematicbracket} suggests strongly that the target-space \lf-algebra brackets can be written down using nested Poisson brackets of the corresponding currents, via the universal bracket formula \eqref{currentalgebramainresult}.

The relation must necessarily be somewhat indirect, because the current algebra of phase space currents with the Poisson brackets \eqref{currentalgebramainresult} is a Lie algebra, while the \lf-algebra of the tensor hierarchy produced by the QP-structure will be a Lie $P$-algebra, so it will in general have non-vanishing brackets of up to $P+1$ arguments. A natural conjecture is that the relation is through the \emph{transgression (of Lie $P$-algebras to Lie algebras}) of Sati and Schreiber \cite{Sati:2015yda}: roughly speaking, they produce a Lie algebra from a Lie $P$-algebra by considering the space of embeddings of a $p$-brane worldvolume onto a geometry. This is of course strongly reminiscent of our brane phase space construction. A technical difference is that our $n$-form currents \eqref{mycurrentDefforhumans} are integrated against test $(p-n)$-forms $\epsilon$ to produce an integral over the whole worldspace $\Sigma$, whereas those of \cite{Sati:2015yda} are not. It is in the latter formulation that Poisson brackets in field theory and \lf-algebras were originally connected (by Barnich, Fulp, Lada, and Stasheff \cite{Barnich:1997ij}; see also \cite{markl1999differential}). Their higher Poisson brackets appear to be connected to the freedom of shifting local functionals by total divergence terms (see also \cite{Fiorenza:2013kqa,rogers2012algebras}), which must be encoded rather differently in our approach due to the smearing against test forms $\epsilon$.

\section*{Acknowledgements}

This paper originated from discussions with David Tennyson and Daniel Waldram, and with Chris Blair, David Osten, and Dan Thompson. I am grateful for continued collaboration with Chris Blair and Dan Thompson on a related project that has informed this one, and for their numerous suggestions for the manuscript. I would also like to thank David Tennyson for many clarifications on \cite{upcomingDave} and for sending me an advance copy of that manuscript, and David Osten for correspondence that helped me understand his recent paper \cite{Osten:2021fil}. Finally I would also like to thank Paul K.~Townsend for reading the manuscript, for encouragement, and for bringing certain references to my attention.

I would furthermore like to thank David Berman and Noriaki Ikeda for their helpful correspondence following the appearance of the original manuscript on the \verb arXiv .

\noindent I am  supported  in  part  by  the  ``FWO-Vlaanderen''  through  the  project G006119N and by the Vrije Universiteit Brussel through the Strategic Research Program ``High-Energy Physics''.

\appendix

\section{Sign manifesto, conventions, the AKSZ construction, etc.}
\label{app:aksz}
Our conventions are completely compatible with those of \cite{Arvanitakis:2018cyo} and any details omitted here are explained there. The same is hopefully true of the references. We recall a few relevant points
\begin{itemize}
	\item The grassmann parity is determined by the {\bf sum of all relevant degrees modulo 2}. For example, when we expand the degree-zero superfield in \eqref{main:genericpbraneSuperfieldExpansion}
	\be
	\bm{x}^\mu(\sigma,d\sigma)= x^\mu_0(\sigma) + x^\mu_\alpha(\sigma) d\sigma^\alpha +\dots
	\ee
	we have $x^\mu_\alpha d\sigma^\alpha=-d\sigma^\alpha x^\mu_\alpha$ since $d\sigma$ has degree $\deg(\sigma^\alpha)+1=1$ and $x^\mu_\alpha$ has degree $\deg x^\mu_\alpha=-1$. Similarly, if $\zeta$ is an odd-degree coordinate (as in the case of the M5 brane), then $d\zeta$ commutes with everything. If we had honest fermion fields $\theta$, then $d\theta$ would be bosonic, as would the 1-form component field $\theta_\alpha(\sigma)$ in a superfield expansion as above.
	\item All vector fields are {\bf left derivations}. This means for a derivation $X$
	\be
	X(fg)= X(f) g + (-1)^{fX} f X(g)\,.
	\ee
	Since the exterior/de Rham derivative $d$ and contraction $\iota_X$ can both be seen as vector fields on the shifted tangent bundle $T[1]M$, we assume they also are left derivations. They are defined by
	\be
	d f=dz^a \frac{\partial f}{\partial z^a}\,,\quad \iota_X df= X(f)\,.
	\ee
	(In \cite{Arvanitakis:2018cyo} we wrote $X\cdot f$ for $X(f)$. It is unclear why that seemed like a good idea.)
	\item The only exception to the above rule is the left argument of any Poisson bracket, which is a {\bf right derivation}. Therefore, if $(\bullet,\bullet)$ is the degree $-P$ bracket on target space $\cM$,
	\be
	(f,gh)=(f,g)h + (-1)^{(f-P)g}g (f,h)\,,\quad (gh,f)=g(h,f) + (-1)^{(f-P)h}(g,f)h\,.
	\ee
	Incidentally, the ``right derivatives'' of \cite{Arvanitakis:2018cyo} defined by $df=\partial^R_a f dz^a$ (for $f=f(z)$) are not right derivations in the above sense. (Fortunately, our commitment to left derivations was strong even in 2018, so, to our knowledge, none of the formulas in \cite{Arvanitakis:2018cyo} contain any errors.)
	\item A graded symplectic form $\omega$ and its Poisson bracket $(\bullet,\bullet)$ are related by
	\be
	\iota_{X_f}\omega=(-1)^f df\,,\quad X_f\equiv(f,\bullet)\,.
	\ee
\end{itemize}

If $\Sigma$ is a real (orientable) manifold, we identify its complex of polyforms $\Lambda^\bullet\Sigma$ with the ring of functions over the shifted tangent bundle $T[1]\Sigma$. We mention this here to fix our convention with regard to the wedge product: if $\alpha$ is a $p$-form
\be
\alpha=(p!)^{-1}\alpha_{\alpha_1\alpha_2\dots\alpha_p}d\sigma^{\alpha_1}\cdots d\sigma^{\alpha_p}
\ee
and $\beta$ is a $q$-form, the wedge product is simply
\be
\alpha\wedge \beta=\alpha \beta= (p!)^{-1}(q!)^{-1} \alpha_{\alpha_1\alpha_2\dots\alpha_p}\beta_{\alpha_{p+1}\cdots \alpha_{p+q}}d\sigma^{\alpha_1}\cdots d\sigma^{\alpha_{p+q}}\,,
\ee
so the wedge product in terms of the antisymmetric $p$- and $q$-tensor components acquires the standard factor of ${p+q \choose p}$.

We need some formulas for an AKSZ construction where the source manifold $\cN$ has a measure of non-standard degree
\be
-n=\deg \int_{\cN}\,,
\ee
as in the extension of AKSZ topological field theory to strata of arbitrary dimension of \cite[Remark 6.4]{Cattaneo:2012qu}. We assume that the measure ``acts from the right'', so there are no minus signs when pulling odd-parity constants out of integrals from the left. For the case $\cN=T[1]\Sigma$ employed in the main test, this means that in the left-hand side of \eqref{main:realmeasurenormalisation} the fermionic integral is given by iterated right Grassmann derivatives $\pd^\text{right}/\pd(d\sigma^\alpha)$. 

In all cases, we have a non-degenerate Poisson structure $[\bullet,\bullet]$ on the space $\maps(\cN\to\cM)$ given by transgressing the target space symplectic form $\omega$ to a symplectic form $\Omega$ on the space of maps:
\be
\omega=\tfrac{1}{2} dz^a\omega_{ab}(z)dz^b \to \Omega=\int_{\cN}\tfrac{1}{2}\delta \bm{z}^a \omega_{ab}(\bm{z}) \delta\bm{z}^b\,,
\ee
where $\bm{z}^a=\varphi^\star(z^a)$ are the superfields that define a point $\varphi \in \maps(\cN\to\cM)$. Since $\deg \Omega=P-n$, we find $\deg[\bullet,\bullet]=n-P$. This leads to the following two interesting cases for $n$:
\be
n=\begin{cases}
P+1: \deg[\bullet,\bullet]=+1 &\text{(standard AKSZ)}\\
P-1: \deg[\bullet,\bullet]=-1  &\text{(brane phase space construction of this paper)}
\end{cases}
\ee

We write $\delta$ instead of $d$ for the exterior derivative on $\maps(\cN\to\cM)$. This is a left derivation. It defines the (left) variational derivative on functionals $F[\bm{z}]\in C^\infty(\maps(\cN\to\cM)$ per our general convention:
\be
\delta F=\int_{\cN}\delta\bm{z}^a \frac{\delta F}{\delta \bm{z}^a}\,.
\ee
The Poisson bracket of two functionals reads
\be
\label{squarebracket:def}
[F,G]\equiv X_F\cdot G\equiv (-1)^F\int (-1)^{(a+1)(F+n)}\frac{\delta F}{\delta \bm{z}^a}\omega^{ab}(\bm{z})\frac{\delta G}{\delta \bm{z}^b}\,,
\ee
where the Poisson bivector on target space is defined by $\omega^{ab}\omega_{bc}=\delta^a_c$. The sign factor arises because $\deg \delta F/\delta \bm{z}^a= \deg F-\deg z^a +n$; it turns the left derivation into a right derivation, so the Poisson bracket $[\bullet,\bullet]$ is a right derivation on its left argument. Note that if we take $\cN=\{\text{point}\},n=0$ this formula reduces to the explicit expression for the target space Poisson bracket $(\bullet,\bullet)$ in terms of its bivector.

We can now calculate the bracket of currents \eqref{mycurrentDefforhumans} that we need in the main text:
\begin{align}
[\langle f|\epsilon\rangle\,,\langle g|\eta\rangle]&=(-1)^{f+\epsilon+n}\int (-1)^{(a+1)(f+\epsilon)}\partial_a f \epsilon \omega^{ab}\partial_b g \eta\\
&=(-1)^{f+\epsilon+n}\int (-1)^{(a+1)f +\epsilon(f+1)} \epsilon\partial_a f \omega^{ab}\partial_b g \eta\\
&=(-1)^{n+\epsilon f}\int \epsilon (f,g)\eta\\
&=(-1)^{n+(g+P)\epsilon}\int (f,g)\epsilon\eta= (-1)^{P+n+(g+P)\epsilon} \langle(f,g)|\epsilon\eta\rangle \,.
\label{squarebracket:ultralocal}
\end{align}

\paragraph{Lifts of vector fields.} Clearly diffeomorphisms of both $\cM$ and $\cN$ act on the mapping space $\maps(\cN\to\cM)$, and their actions commute. Infinitesimally speaking for any vector field $X$ on $\cM$ and $N$ on $\cN$ we obtain their \emph{lifts} $\bar X$ and $\bar N$ at the point $\varphi\in\maps(\cN\to\cM)$ by the explicit formulas
\be
\label{sourcevectorliftformula}
\bar N_\varphi(F)\equiv\int_{\cN} N(\bm{z}^a)\frac{\delta F}{\delta \bm{z}^a}=\int_{\cN} N(\varphi^\star z^a)\Big(\frac{\delta F}{\delta \bm{z}^a}\Big)_\varphi
\ee
and
\be
\bar X_\varphi(F)\equiv \int_{\cN} X^a(\bm{z})\frac{\delta F}{\delta \bm{z}^a}=\int_{\cN} \varphi^\star(X^a(z))\Big(\frac{\delta F}{\delta \bm{z}^a}\Big)_\varphi.
\ee
The lift map is a homomorphism of graded Lie algebras for vectors $X_{1,2}$ on the target with respect to the graded commutator of vector fields
\be
[X,Y]\equiv X Y-(-1)^{XY} YX\,,
\ee 
but an \emph{anti}homomorphism for $N_{1,2}$ vectors on the source $\cN$ \cite[Proposition 1]{voronov2012vector}:
\be
\overline{[X_1,X_2]}=[\bar X_1,\bar X_2]\,,\quad \overline{[N_1,N_2]}=-[\bar N_1,\bar N_2]\,.
\ee
The Lie commutator of source vector fields is used in the derivation of the spatial diffeomorphism generators in the main text. This minus sign can be derived by carefully calculating via \eqref{sourcevectorliftformula}.

\let\oldbibliography\thebibliography
\renewcommand{\thebibliography}[1]{%
  \oldbibliography{#1}%
  \setlength{\itemsep}{-1pt}%
}
{\setstretch{0}
\small
\bibliography{Alex_notes}}

\end{document}